\newcommand\Tx[1]{\mathrm{#1}}
\newcommand\Se[1]{\mathcal{#1}}
\newcommand\Db[1]{\mathbb{#1}}
\newcommand\MB[1]{\left[#1\right]}
\newcommand\LB[1]{\{#1\}}
\newcommand\SB[1]{\left(#1\right)}
\newcommand{\RN}[1]{\textup{\uppercase\expandafter{\romannumeral#1}}}
\newtheorem{theo}{Theorem}
\newtheorem{lemma}{Lemma}
\newtheorem{exam}{Example}
\newtheorem{defi}{Definition}
\newtheorem{rem}{Remark}
\newtheorem{cons}{Construction}
\newcommand{\cloud}[4]{
    \draw[scale = #1,shift = {(#2,#3)},thick] (-1.6,-0.7) .. controls (-2.3,-1.1)
and (-2.7,0.3) .. (-1.7,0.3) .. controls (-1.6,0.7)
and (-1.2,0.9) .. (-0.8,0.7) .. controls (-0.5,1.5)
and (0.6,1.3) .. (0.7,0.5) .. controls (1.5,0.4)
and (1.2,-1) .. (0.4,-0.6) .. controls (0.2,-1)
and (-0.2,-1) .. (-0.5,-0.7) .. controls (-0.9,-1)
and (-1.3,-1) .. cycle;
\node[scale = #1,shift = {(#2,#3)}] at (-0.35*#1,0) {{\huge{\textbf{Cloud $\textbf{#4}$}}}};
}
\newcommand{\server}[3]{
   \fill[scale = #1,shift = {(#2,#3)},top color=gray!50!gray,bottom color=gray!10,middle color=gray,shading=axis,opacity=0.25] (0,0) circle (2 and 0.5);
   \fill[scale = #1,shift = {(#2,#3)},left color=gray!50!gray,right color=gray!50!gray,middle color=gray!50,shading=axis,opacity=0.25] (2,0) -- (2,6) arc (360:180:2 and 0.5) -- (-2,0) arc (180:360:2 and 0.5);
   \fill[scale = #1,shift = {(#2,#3)},top color=gray!90!,bottom color=gray!2,middle color=gray!30,shading=axis,opacity=0.25] (0,6) circle (2 and 0.5);
   \fill[scale = #1,shift = {(#2,#3)},top color=gray!90!,bottom color=gray!2,middle color=gray!30,shading=axis,opacity=0.25] (0,4.5) circle (2 and 0.5);
   \fill[scale = #1,shift = {(#2,#3)},top color=gray!90!,bottom color=gray!2,middle color=gray!30,shading=axis,opacity=0.25] (0,3) circle (2 and 0.5);
   \fill[scale = #1,shift = {(#2,#3)},top color=gray!90!,bottom color=gray!2,middle color=gray!30,shading=axis,opacity=0.25] (0,1.5) circle (2 and 0.5);
   \draw[scale = #1,shift = {(#2,#3)}] (-2,6) -- (-2,0) arc (180:360:2 and 0.5) -- (2,6) ++ (-2,0) circle (2 and 0.5);
   \draw[scale = #1,shift = {(#2,#3)}] (-2,4.5) -- (-2,0) arc (180:360:2 and 0.5) -- (2,4.5) ++ (-2,0) circle (2 and 0.5);
   \draw[scale = #1,shift = {(#2,#3)}] (-2,3) -- (-2,0) arc (180:360:2 and 0.5) -- (2,3) ++ (-2,0) circle (2 and 0.5);
   \draw[scale = #1,shift = {(#2,#3)}] (-2,1.5) -- (-2,0) arc (180:360:2 and 0.5) -- (2,1.5) ++ (-2,0) circle (2 and 0.5);
   \draw[scale = #1,shift = {(#2,#3)}] (-2,0) arc (180:0:2 and 0.5);
}
\def\BState{\State\hskip-\ALG@thistlm}
\newcommand*{\algrule}[1][\algorithmicindent]{\makebox[#1][l]{\hspace*{.5em}\vrule height 0.9 \baselineskip depth 0.3\baselineskip}}%
\def\ALG@printindent{%
    \ifnum \theALG@nested>0
        \ifx\ALG@text\ALG@x@notext
            \addvspace{-3pt}
        \else
            \unskip
            \ALG@printindent@tempcnta=1
            \loop
                \algrule[\csname ALG@ind@\the\ALG@printindent@tempcnta\endcsname]%
                \advance \ALG@printindent@tempcnta 1
            \ifnum \ALG@printindent@tempcnta<\numexpr\theALG@nested+1\relax
            \repeat
        \fi
    \fi
    }%
\patchcmd{\ALG@doentity}{\noindent\hskip\ALG@tlm}{\ALG@printindent}{}{\errmessage{failed to patch}}
\begin{document}
\title{{Hierarchical Coding to Enable Scalability and Flexibility in Heterogeneous Cloud Storage}}
\author{\IEEEauthorblockN{Siyi Yang$^1$, Ahmed Hareedy$^2$, Robert Calderbank$^2$, and Lara Dolecek$^1$}
\IEEEauthorblockA{$^1$ Electrical and Computer Engineering Department, University of California, Los Angeles, Los Angeles, CA 90095 USA\\
$^2$ Electrical and Computer Engineering Department, Duke University, Durham, NC 27705 USA\\
siyiyang@ucla.edu, ahmed.hareedy@duke.edu, robert.calderbank@duke.edu, and dolecek@ee.ucla.edu
}}
\maketitle

\begin{abstract} In order to accommodate the ever-growing data from various, possibly independent, sources and the dynamic nature of data usage rates in practical applications, modern cloud data storage systems are required to be scalable, flexible, and heterogeneous. Codes with hierarchical locality have been intensively studied due to their effectiveness in reducing the average reading time in cloud storage. In this paper, we present the first codes with hierarchical locality that achieve scalability and flexibility in heterogeneous cloud storage using small field size. We propose a double-level construction utilizing so-called Cauchy Reed-Solomon codes. We then develop a triple-level construction based on this double-level code; this construction can be easily generalized into any hierarchical structure with a greater number of layers since it naturally achieves scalability in the cloud storage systems. 
\end{abstract}

\IEEEpeerreviewmaketitle

\section{Introduction}
\label{section: introduction}
Codes offering hierarchical locality have been intensely studied because of their ability to reduce the average reading time in various erasure-resilient data storage applications including Flash storage, redundant array of independent disks (RAID) storage, cloud storage, etc. \cite{huang2017multi,ballentine2018codes,cassuto2017multi}. Codes with shorter block lengths offer lower latency, but they provide limited erasure-correction capability in a cloud storage system. To deal with more erasures, longer codes can be employed. However, since a simultaneous occurrence of a large number of erasures is a rare event, longer codes result in unnecessary extra reading cost, and are on average inefficient. Therefore, maintaining low latency while simultaneously recovering from a potentially large number of erasures is one of the major challenges in cloud storage. Codes with hierarchical locality have been shown to address this issue by providing multi-level access in cloud storage, which enables the data to be read through a chain of network components with increasing data lengths from top to bottom; this architecture is exploited to increase the overall erasure-correction capability\cite{hassner2001integrated}. 



In the literature, codes offering double-level access have been intensely studied\cite{cassuto2017multi,hassner2001integrated,blaum2018extended,zhang2018generalized,wu2017generalized,martnez2018universal}; these codes are applicable in double-level cloud storage. In this configuration, $p$ consecutive local messages are jointly encoded into $p$ correlated local codewords. Each local codeword is stored at the neighboring servers of the corresponding local cloud. The codes are designed such that each local message can be successfully decoded from the corresponding local codeword when there are fewer than $d_1$ local erasures, and the global codeword provides extra protection against $(d_2-d_1)$ unexpected errors in a local codeword, for some $d_2>d_1$. An example having $p=4$ is in Fig.~1. Suppose $d_1=2$ and $d_2=3$. When there is at most $1$ server failure,  accessing the servers connected to cloud $1$ is sufficient to successfully decode the data stored in cloud $1$. If the number of server failures in cloud $1$ is $2$, the data can still be obtained through accessing all the servers. Codes with hierarchical locality are a generalized extension of double-level accessible codes, in which more than two levels of access are allowed and are naturally suitable for cloud storage with multiple layers.

\begin{figure}\label{fig: cloudstorage}
\centering
\begin{tikzpicture}
\cloud{0.4}{0}{0}{};
\foreach \x in {0.4}{
    \foreach \y in {3.5}{
    \cloud{0.3}{\x}{\y}{3};  
    \draw[thick] (0.1,0.8) -- (0.3*\x-0.2,0.3*\y-0.6);
    \foreach \a in {0.1}{
        \foreach \b in {1.0}{
        \server{0.1}{3*\x+10*\a}{3*\y+10*\b};
        \draw[thick] (0.3*\x+0.35*\a,0.3*\y+0.35*\b) -- (0.3*\x+0.95*\a,0.3*\y+0.95*\b);
        }
    }
    \foreach \a in {1}{
        \foreach \b in {0.8}{
        \server{0.1}{3*\x+10*\a}{3*\y+10*\b};
        \draw[thick] (0.3*\x+0.25*\a,0.3*\y+0.2*\b) -- (0.3*\x+0.95*\a,0.3*\y+0.95*\b);
        }
    }
    \foreach \a in {-1.0}{
        \foreach \b in {0.7}{
        \server{0.1}{3*\x+10*\a}{3*\y+10*\b};
        \draw[thick] (0.3*\x+0.35*\a,0.3*\y+0.35*\b) -- (0.3*\x+0.95*\a,0.3*\y+0.95*\b);
        }
    }
    } 
}

\foreach \x in {-5}{
    \foreach \y in {2}{
    \cloud{0.3}{\x}{\y}{2};  
    \draw[thick] (-1.2,0.45) -- (0.3*\x+0.85,0.3*\y-0.4);
    \foreach \a in {-0.1}{
        \foreach \b in {0.8}{
        \server{0.1}{3*\x+10*\a}{3*\y+10*\b};
        \draw[thick] (0.3*\x+0.4*\a,0.3*\y+0.47*\b) -- (0.3*\x+0.95*\a,0.3*\y+0.95*\b);
        }
    }
    \foreach \a in {-0.7}{
        \foreach \b in {-1.2}{
        \server{0.1}{3*\x+10*\a}{3*\y+10*\b};
        \draw[thick] (0.3*\x+0.15*\a,0.3*\y+0.15*\b) -- (0.3*\x+0.95*\a,0.3*\y+0.95*\b+0.6);
        }
    }
    \foreach \a in {-1.2}{
        \foreach \b in {0.3}{
        \server{0.1}{3*\x+10*\a}{3*\y+10*\b};
        \draw[thick] (0.3*\x+0.42*\a,0.3*\y+0.5*\b) -- (0.3*\x+0.9*\a,0.3*\y+0.9*\b);
        }
    }
    } 
}

\foreach \x in {5}{
    \foreach \y in {0}{
    \cloud{0.3}{\x}{\y}{4};  
    \draw[thick] (0.48,0.0) -- (0.3*\x-0.68,0.3*\y-0.15);
    \foreach \a in {0.4}{
        \foreach \b in {0.8}{
        \server{0.1}{3*\x+10*\a}{3*\y+10*\b};
        \draw[thick] (0.3*\x+0.35*\a,0.3*\y+0.35*\b) -- (0.3*\x+0.95*\a,0.3*\y+0.95*\b);
        }
    }
    \foreach \a in {-0.5}{
        \foreach \b in {-1.2}{
        \server{0.1}{3*\x+10*\a}{3*\y+10*\b};
        \draw[thick] (0.3*\x+0.2*\a,0.3*\y+0.22*\b) -- (0.3*\x+0.95*\a,0.3*\y+0.95*\b+0.6);
        }
    }
    \foreach \a in {0.7}{
        \foreach \b in {-1.2}{
        \server{0.1}{3*\x+10*\a}{3*\y+10*\b};
        \draw[thick] (0.3*\x+0.15*\a,0.3*\y+0.15*\b) -- (0.3*\x+0.95*\a,0.3*\y+0.95*\b+0.6);
        }
    }
    } 
}

\foreach \x in {-2}{
    \foreach \y in {-4}{
    \cloud{0.3}{\x}{\y}{1};  
    \draw[thick] (-0.3,-0.35) -- (0.3*\x,0.3*\y+0.37);
    \foreach \a in {-0.1}{
        \foreach \b in {-1.4}{
        \server{0.1}{3*\x+10*\a}{3*\y+10*\b};
        \draw[thick] (0.3*\x+0.2*\a,0.3*\y+0.2*\b) -- (0.3*\x+0.95*\a,0.3*\y+0.95*\b+0.6);
        }
    }
    \foreach \a in {-1.2}{
        \foreach \b in {-0.5}{
        \server{0.1}{3*\x+10*\a}{3*\y+10*\b};
        \draw[thick] (0.3*\x+0.4*\a,0.3*\y+0.4*\b) -- (0.3*\x+0.85*\a,0.3*\y+0.85*\b);
        }
    }
    \foreach \a in {1}{
        \foreach \b in {-0.8}{
        \server{0.1}{3*\x+10*\a}{3*\y+10*\b};
        \draw[thick] (0.3*\x+0.3*\a,0.3*\y+0.2*\b) -- (0.3*\x+0.8*\a,0.3*\y+0.8*\b);
        }
    }
    } 
}
\end{tikzpicture}
\vspace{-0.5em}
\caption{Double-level cloud storage. Servers connected to local clouds store the local codewords; the local clouds are connected to a central cloud.}
\end{figure}
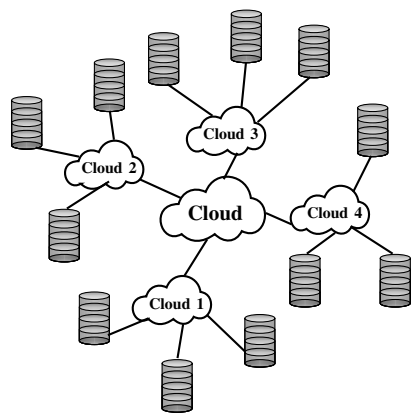

Along with \emph{hierarchical locality} discussed previously, it is also important for the coding schemes to support scalable, heterogeneous, and flexible cloud storage\cite{rimal2009taxonomy}. \emph{Scalability} enables expanding the backbone network to accommodate additional workload, i.e., additional clouds, without rebuilding the entire infrastructure. \emph{Heterogeneity} refers to the property of allowing nonidentical local data lengths and providing unequal local protection, which is important for cloud storage with heterogeneous structures. A heterogeneous structure arises in networks consisting of geographically separated components, and they often store data from different sources. \emph{Flexibility} has been firstly investigated for dynamic data storage systems in \cite{martnez2018universal}, and it refers to the property that the local cloud can be split into two smaller local clouds without worsening the global erasure-correction capability nor changing the remaining components. This splitting, for example, is applied when cold data stored at a local cloud become hot unexpectedly.

Various codes offering hierarchical locality have been studied. Cassuto \emph{et al.}\cite{cassuto2017multi} presented so-called multi-block interleaved codes that provide double-level access; this work introduced the concept of multi-level access. The family of integrated-interleaved (I-I) codes, including generalized integrated interleaved (GII) codes and extended integrated interleaved (EII) codes, has been a major prototype for codes with multi-level access \cite{hassner2001integrated,wu2017generalized,zhang2018generalized,blaum2018extended}. GII codes have the advantage of correcting a large set of error patterns, but the distribution of the data symbols is highly restricted, and all the local codewords are equally protected. EII codes are extensions of GII codes with double-level access, where specific arrangements of data symbols have been investigated, mitigating the aforementioned restriction. However, no similar study has been proposed for GII codes with hierarchical locality. Therefore, I-I codes are more suitable for applications where heterogeneity and flexibility are less important. Sum-rank codes are another family of codes that is proposed for dynamic distributed storage offering double-level access\cite{martnez2018universal}. These codes are maximally recoverable, flexible, and allow unequal protection for local data. However, sum-rank codes require a finite field size that grows exponentially with the maximum local block length, which is a major obstacle to being implemented in real world applications.

In this paper, we introduce code constructions with hierarchical locality and a small field size that achieve scalability, heterogeneity, and flexibility. The paper is organized as follows. In \Cref{section: notation and preliminaries}, we introduce the notation and preliminaries. In \Cref{section: codes for multi-level access}, we present a new construction of codes offering hierarchical locality that is based on Cauchy Reed Solomon (CRS) codes. This construction requires a field size that grows linearly with the maximum local codelength. In \Cref{section: scalability and flexibility in cloud storage}, we then show that our coding scheme is scalable, heterogeneous, and flexible. Finally, we summarize our results in \Cref{section: conclusion}.

\section{Notation and Preliminaries}
\label{section: notation and preliminaries}

Throughout the rest of this paper, $\MB{N}$ refers to $\{1,2,\dots,N\}$, and $\MB{a:b}$ refers to $\{a,a+1,\dots,b\}$. Denote the all zero vector of length $s$ by $\bold{0}_s$. Similarly, the all zero matrix of size $s\times t$ is denoted by $\bold{0}_{s\times t}$. The alphabet field, denoted by $\textup{GF}(q)$, is a Galois field of size $q$, where $q$ is a power of a prime. For a vector $\bold{v}$ of length $n$, $v_i$, $1\leq i\leq n$, represents the $i$-th component of $\bold{v}$, and $\bold{v}\MB{a:b}=(v_{a},\dots,v_b)$. For a matrix $\bold{M}$ of size $a\times b$, $\bold{M}\MB{i_1:i_2,j_1:j_2}$ represents the sub-matrix $\bold{M}'$ of $\bold{M}$ such that $(\bold{M}')_{i-i_1+1,j-j_1+1}=(\bold{M})_{i,j}$, $i\in\MB{i_1:i_2}$, $j\in\MB{j_1:j_2}$. All indices start from $1$.

\subsection{Notation and Definitions}
\label{subsection: notation and definitions}

Let $\bold{m}$ and $\bold{c}$ represent messages and codewords, respectively. A set $\Se{C}$ is called an $(n,k,d)_q$-code if $\Se{C}\subset \textup{GF}(q)^n$, $\Tx{dim}(\Se{C})=k$, and $\min\limits_{\bold{c}_1,\bold{c}_2\in \Se{C}, \bold{c}_1\neq\bold{c}_2} d_\textup{H}(\bold{c}_1,\bold{c}_2)=d$, where $d_\textup{H}$ refers to the Hamming distance. We next define a family of codes with double-level access. Note that our discussion is restricted to linear block codes.

\begin{defi}\label{defi: codeDL} Let $p,q\in\Db{N}$. Let $\bold{n}=(n_1,n_2,\dots,n_p)\in\Db{N}^p$, $\bold{k}=(k_1,k_2,\dots,k_p)\in\Db{N}^p$, $\bold{D}\in\Db{N}^{2\times p}$, $(\bold{D})_{x,y}=d_{x,y}$, where $d_{1,x}<d_{2,x}$, $k_{x}<n_{x}$, for all $x,y\in\MB{p}$. 

Let $n=n_1+n_2+\cdots+n_p$. Let $s_0=0$ and $s_{x}=n_1+n_2+\cdots+n_{x}$, $x\in\MB{p}$. Let $\bold{c}_{x}$ denote $\bold{c}\MB{s_{{x}-1}+1:s_{x}}$ and let $\bold{m}_x$ denote the message corresponding to $\bold{c}_x$, for $x\in\MB{p}$. A set $\Se{C}\subset \textup{GF}(q)^n$ is called an $(\bold{n},\bold{k},\bold{D},p)_q$-code if the following conditions are satisfied:
\begin{enumerate}
\item Let $\Se{C}_{x}=\LB{\bold{c}\MB{s_{x-1}+1:s_{x}}:\bold{c}\in\Se{C}}$, $x\in\MB{p}$. Each $\Se{C}_{x}$ is an $(n_{x},k_{x},d_{1,{x}})_q$-code.
\item Let $\Se{A}_{x}=\LB{\bold{c}\MB{s_{{x}-1}+1:s_{x}}:\bold{c}\in\Se{C}, \bold{c}\MB{s_{y-1}+1:s_{y}}=\bold{0}_{n_{y}}, \forall y\in\MB{p}\setminus \LB{x}}$, $x\in\MB{p}$. Each $\Se{A}_{x}$ is an $(n_{x},k_{x},d_{2,x})_q$-code.
\end{enumerate}
\end{defi}

\begin{exam}\label{exam: defi1} Let $q=16$ and $p=2$. Let $\bold{n}=(10,11)$ and $\bold{k}=(6,7)$. Then, $\bold{r}=\bold{n}-\bold{k}=(4,4)$. Suppose $\bold{D}$ is specified as follows:
\begin{equation}
\bold{D}=\left[\begin{array}{c|c}
4 & 3\\
\hline
7 & 6
\end{array}\right].
\end{equation}
Then, one can construct an $(\bold{n},\bold{k},\bold{D},p)_q$-code with the parameters specified previously.
\end{exam}

Any $(\bold{n},\bold{k},\bold{D},p)_q$-code specified according to \Cref{defi: codeDL} corrects $(d_{1,x}-1)$ erasures in the $i$-th local codeword via local access, and corrects additional $(d_{2,x}-d_{1,x})$ erasures through global access when other local codewords are all correctable via local access. Following this notation, \Cref{defi: codeHL} extends \Cref{defi: codeDL} into the triple-level case.

\begin{defi}\label{defi: codeHL} Let $q,p_0\in\Db{N}$, $\bold{p}=(p_1,p_2,\dots,p_{p_0})\in\Db{N}^{p_0}$, $p=p_1+p_2+\cdots+p_{p_0}$. Let $\bold{n}=(\bold{n}_1,\bold{n}_2,\dots,\bold{n}_{p_0})\in\Db{N}^{p_0}$, $\bold{k}=(\bold{k}_1,\bold{k}_2,\dots,\bold{k}_{p_0})\in\Db{N}^{p_0}$, where $\bold{n}_{x}=(n_{x,1},n_{x,2},\dots,n_{x,p_{x}})\in\Db{N}^{p_x}$, $\bold{k}_{x}=(k_{x,1},k_{x,2},\dots,k_{x,p_{x}})\in\Db{N}^{p_{x}}$, for all $x\in\MB{p_0}$. 

Let $t_0=0$, $t_{x}=p_1+p_2+\cdots+p_{x}$, $x\in\MB{p_0}$. Suppose $\bold{D}\in\Db{N}^{3\times p}$. Let $d_{l,x,i}=(\bold{D})_{l,t_{x-1}+i}$, $l\in\MB{3}$ so that $d_{1,x,i}<d_{2,x,i}<d_{3,x,i}$, for $x\in\MB{p_0}$ and $i\in\MB{p_{x}}$. Let $\bold{D}_{x}=\bold{D}\MB{1:2,t_{x-1}+1:t_{x}}$, $x\in\MB{p_0}$. Let $n_{x}=n_{x,1}+n_{x,2}+\cdots+n_{x,p_{x}}$ for all $x\in\MB{p_0}$. Let $n=n_1+n_2+\cdots+n_{p_0}$. Let $s_0=0$, $s_{x}=n_1+n_2+\cdots+n_{x}$, $x\in\MB{p_0}$. Let $s_{x,0}=s_{x}$, $s_{x,i}=s_{x}+n_{x,1}+n_{x,2}+\cdots+n_{x,i}$, for all $x\in\MB{p_0}$ and $i\in\MB{p_{x}}$. Let $\bold{c}_{x,i}$ denote $\bold{c}\MB{s_{x,i-1}+1:s_{x,i}}$ and let $\bold{m}_{x,i}$ denote the message corresponding to $\bold{c}_{x,i}$, for $x\in\MB{p_0}$, $i\in\MB{p_{x}}$. A set $\Se{C}\subset \textup{GF}(q)^n$ is called an $(\bold{n},\bold{k},\bold{D},p_0,\bold{p})_q$-code if the following conditions are satisfied:

\begin{enumerate}
\item Let $\Se{C}_{x}=\LB{\bold{c}\MB{s_{x-1}+1:s_{x}}:\bold{c}\in\Se{C}}$, $x\in\MB{p_0}$. Each $\Se{C}_{x}$ is an $(\bold{n}_{x},\bold{k}_{x},\bold{D}_{x},p_{x})_q$-code.
\item Let $\Se{A}_{x,i}=\{\bold{c}\MB{s_{x,i-1}+1:s_{x,i}}:\bold{c}\in\Se{C},\bold{c}\MB{s_{y,j-1}+1:s_{y,j}}=\bold{0}_{n_{y,j}}, \forall y\in\MB{p_0},j\in\MB{p_{y}},(x,i)\neq (y,j)\}$. Each $\Se{A}_{x}$ is an $(n_{x,i},k_{x,i},d_{3,x,i})_q$-code. 
\end{enumerate}
\end{defi}

\begin{exam}\label{exam: defi2} Let $q=16$, $p_0=3$, and $\bold{p}=(2,2,4)$. Let $\bold{n}=(\bold{n}_1,\bold{n}_2,\bold{n}_3)$, where $\bold{n}_1=(10,11)$, $\bold{n}_2=(10,10)$, and $\bold{n}_3=(12,12,12,12)$. Let $\bold{k}=(\bold{k}_1,\bold{k}_2,\bold{k}_3)$, where $\bold{k}_1=(6,6)$, $\bold{k}_2=(7,7)$, and $\bold{k}_3=(9,8,9,9)$. Then, $\bold{r}=\bold{n}-\bold{k}=(\bold{r}_1,\bold{r}_2,\bold{r}_3)$, where $\bold{r}_1=(4,5)$, $\bold{r}_2=(3,3)$, $\bold{r}_3=(3,4,3,3)$. Suppose $\bold{D}$ is specified as follows:
\begin{equation}
\bold{D}=\left[\begin{array}{cc|cc|cccc}
2 &3 & 2 & 2 & 2 & 2 & 2 & 2\\ 
\hline
6 &7 & 5 & 5 & 8 & 8 & 8 & 8\\ 
\hline
9 &10 & 9 & 9 & 11 & 11 & 11 & 11 
\end{array}\right].
\end{equation}
Then, one can construct an $(\bold{n},\bold{k},\bold{D},p_0,\bold{p})_q$-code with the parameters specified previously.
\end{exam}

This definition can be easily generalized into codes with more than three levels of access. For simplicity, we constrain our discussion to the triple-level case.

\subsection{Cauchy Matrices}
\label{subsection: CauchyMatrices}
Cauchy matrices are the key component in the construction that we will introduce shortly. 

\begin{defi} (Cauchy matrix) \label{CauchyMatrix} Let $s,t\in\Db{N}$ and $\textup{GF}(q)$ be a finite field of size $q$. Suppose $a_1,\dots,a_x,b_1,\dots,b_y$ are pairwise distinct elements in $\textup{GF}(q)$. The following matrix is known as a \textbf{Cauchy matrix},

\begin{equation*}\left[
\begin{array}{cccc}
\frac{1}{a_1-b_1} & \frac{1}{a_1-b_2} & \dots & \frac{1}{a_1-b_t}\\
\frac{1}{a_2-b_1} & \frac{1}{a_2-b_2} & \dots & \frac{1}{a_2-b_t}\\
\vdots & \vdots &\ddots & \vdots \\
\frac{1}{a_s-b_1} & \frac{1}{a_s-b_2} & \dots & \frac{1}{a_s-b_t}\\
\end{array}\right].
\label{defi: Cauchy matrix}
\end{equation*}
We denote this matrix by $\bold{Y}(a_1,\dots,a_s;b_1,\dots,b_t)$.
\end{defi}

Cauchy matrices are \emph{totally invertible}, i.e., every square sub-matrix of a Cauchy matrix is invertible. The inverse of a given Cauchy matrix can be explicitly computed using algorithms of lower complexity than those for inverting Vandermonde matrices. These properties make Cauchy matrices promising in designing systematic maximum distance separable (MDS) codes. \Cref{lemma: Good matrix} presents a useful result about Cauchy matrices that will be used repeatedly in this paper.

\begin{lemma} \label{lemma: Good matrix} Let $s,t,r\in\Db{N}$ such that $t-s<r\leq t$, $\bold{A}\in \textup{GF}(q)^{s\times t}$. If $\bold{A}$ is a Cauchy matrix, then the following matrix $\bold{M}$ is a parity-check matrix of an $(s+r,s+r-t,t+1)_q$-code.
\begin{equation*}
\bold{M}=\left[
\begin{array}{c}
\bold{A}\\
-\bold{I}_r\ \bold{0}_{r\times(t-r)}\\
\end{array}
\right]^{\Tx{T}}.
\end{equation*}
\end{lemma}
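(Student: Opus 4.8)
The plan is to reduce everything to one combinatorial fact: $\bold{M}$ is a $t\times(s+r)$ matrix, every $t$ of whose columns are linearly independent. First I would record the dimensions. The bottom block $[-\bold{I}_r\ \bold{0}_{r\times(t-r)}]$ has $r$ rows and $t$ columns (this is where $r\le t$ is used), and $\bold{A}$ has $s$ rows and $t$ columns, so the stacked matrix is $(s+r)\times t$ and $\bold{M}$ is $t\times(s+r)$; hence $\bold{M}$ is the parity-check matrix of a code $\Se{C}$ of length $n=s+r$. The hypothesis $t-s<r$ is exactly $s+r-t\ge 1$, so the target dimension is a positive integer and $n\ge t+1$.

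The heart of the argument is the claim that every $t\times t$ submatrix of $\bold{M}$ is nonsingular, i.e., every choice of $t$ rows of $\bold{M}^{\Tx{T}}=\left[\begin{smallmatrix}\bold{A}\\ -\bold{I}_r\ \bold{0}_{r\times(t-r)}\end{smallmatrix}\right]$ spans $\textup{GF}(q)^t$. I would fix such a choice and split it as $a$ rows taken from the Cauchy block $\bold{A}$, say indexed by $I\subseteq\MB{s}$ with $\Nm{I}=a$, and the remaining $t-a$ rows taken from the identity block; the latter are, up to sign, the standard unit row vectors $\bold{e}_j^{\Tx{T}}\in\textup{GF}(q)^t$ for $j$ in some set $J\subseteq\MB{r}$ with $\Nm{J}=t-a$ (only the first $r$ columns of the identity block are nonzero, which forces $J\subseteq\MB{r}$). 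After permuting rows and columns so that the $t-a$ unit rows sit at the bottom and the columns indexed by $J$ at the right, this $t\times t$ submatrix becomes block upper-triangular with diagonal blocks $\bold{A}\MB{I,\MB{t}\setminus J}$ and $-\bold{I}_{t-a}$, so its determinant is $\pm\det\bold{A}\MB{I,\MB{t}\setminus J}$. Since $\Nm{\MB{t}\setminus J}=t-(t-a)=a=\Nm{I}$, the matrix $\bold{A}\MB{I,\MB{t}\setminus J}$ is a square submatrix of the Cauchy matrix $\bold{A}$, hence invertible by total invertibility of Cauchy matrices (the corner cases $a=0$ and $a=t$ being covered by the convention that an empty determinant equals $1$). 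This proves the claim.

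From the claim, $\bold{M}$ has $t$ linearly independent columns, so $\textup{rank}(\bold{M})=t$ and $\dim\Se{C}=n-t=s+r-t$. Moreover, the minimum distance of a code with parity-check matrix $\bold{M}$ is the least size of a linearly dependent set of columns of $\bold{M}$; the claim gives $d\ge t+1$, while any $t+1$ columns of the rank-$t$ matrix $\bold{M}$ are dependent (there are $n\ge t+1$ of them), so $d=t+1$ and $\Se{C}$ is an $(s+r,s+r-t,t+1)_q$-code (in fact MDS). The only delicate point is the bookkeeping in the row/column permutation step, but since we only need the determinant to be nonzero its sign is immaterial; the essential input, already available to us, is that Cauchy matrices are totally invertible.
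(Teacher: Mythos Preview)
Your argument is correct and follows essentially the same approach as the paper's proof: both select $t$ rows of $\bold{M}^{\Tx{T}}$, split them into $a$ rows from $\bold{A}$ and $t-a$ rows from the identity block, and then invoke the total invertibility of Cauchy matrices on the resulting $a\times a$ submatrix $\bold{A}[I,\MB{t}\setminus J]$. The only cosmetic difference is that the paper phrases this by contradiction (assuming dependence and deriving a singular Cauchy minor), whereas you compute the determinant directly via a block-triangular reduction; your version is slightly more thorough in that it also verifies $\textup{rank}(\bold{M})=t$ and pins down $d=t+1$ exactly rather than just $d\ge t+1$.
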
 

\begin{proof} The parity-check matrix of an $(s+r,s+r-t,t+1)_q$-code satisfies the property that every $t$ columns of this matrix are linearly independent. Therefore, we only need to prove that every $t$ rows of $\bold{M}^{\Tx{T}}$ are linearly independent. We prove Lemma~\ref{lemma: Good matrix} by contradiction. Suppose there exist $t$ rows from $\bold{M}^{\Tx{T}}$ that are linearly dependent. Suppose $a$ of these linearly dependent rows $\bold{r}_1,\bold{r}_2,\dots,\bold{r}_a$ are from $\bold{A}$, and the other $t-a$ rows $\bold{r}_{a+1},\bold{r}_{a+2},\dots,\bold{r}_{t}$ are from $\MB{-\bold{I}_r\ \bold{0}_{r\times(t-r)}}$, where $0\leq t-a\leq r$. Suppose the entries with $-1$ in $\bold{r}_{a+1},\bold{r}_{a+2},\dots,\bold{r}_{t}$ are located in the $i_1,i_2\dots,i_{t-a}$-th columns of $\bold{M}^{\textup{T}}$, then $i_p\leq r$ for all $1\leq p\leq t-a$. Observe that $\MB{t}$ is the set of indices of all columns in $\bold{M}^{\Tx{T}}$. Suppose $\MB{t}\setminus \LB{i_1,i_2,\dots,i_{t-a}}=\LB{j_1,j_2,\dots,j_a}$. Then the $a\times a$ sub-matrix of the intersection of the rows $\bold{r}_1,\bold{r}_{2},\dots,\bold{r}_a$ and the $j_1,j_2,\dots,j_a$-th columns of $\bold{A}$ is singular. A contradiction.
\end{proof}

\section{Codes for Multi-Level Access}
\label{section: codes for multi-level access}
Following the definitions and notation introduced in \Cref{section: notation and preliminaries}, we present a CRS-based code with double-level access in \Cref{subsection: DLaccessCodes}. Then, we extend our construction into a triple-level case in \Cref{subsection: codes with hierarchical locality}.

\subsection{Codes with Double-Level Access}
\label{subsection: DLaccessCodes}

In this subsection, we provide a construction of codes offering double-level access based on the CRS codes. Note that the generator matrix of any systematic code with double-level access has the following structure:
\begin{equation}\label{eqn: GenMatDL}\bold{G}=\left[
\begin{array}{c|c|c|c|c|c|c}
\bold{I}_{k_1} & \bold{A}_{1,1} & \bold{0} & \bold{A}_{1,2} & \dots & \bold{0} & \bold{A}_{1,p}\\
\hline
\bold{0} & \bold{A}_{2,1} & \bold{I}_{k_2} & \bold{A}_{2,2} & \dots & \bold{0}& \bold{A}_{2,p}\\
\hline
\vdots & \vdots & \vdots & \vdots & \ddots & \vdots & \vdots \\
\hline
\bold{0} & \bold{A}_{p,1} & \bold{0} & \bold{A}_{p,2} & \dots & \bold{I}_{k_p} & \bold{A}_{p,p}\\
\end{array}\right].
\end{equation}

\begin{cons}\label{cons: CRScons} (CRS-based code) Let $p\in\Db{N}$, $k_1,k_2,\dots,k_p\in \Db{N}$, $n_1,n_2,\dots,n_p\in\Db{N}$, $\delta_1,\delta_2,\dots,\delta_p\in \Db{N}$ and $\delta=\delta_1+\delta_2+\dots+\delta_p$, with $r_x=n_x-k_x>0$ for all $x\in\MB{p}$. Let $GF(q)$ be a finite field such that $q\geq \max\nolimits_{x\in\MB{p}}\LB{n_x}+\delta$. 

For each $x\in \MB{p}$, let $a_{x,i}$, $b_{x,j}$, $i\in\MB{k_x+\delta_x}$, $j\in\MB{r_x-\delta_x+\delta}$, be distinct elements of $\textup{GF}(q)$.
Consider the Cauchy matrix $\bold{T}_x\in \textup{GF}(q)^{(k_x+\delta_x)\times (r_x-\delta_x+\delta)}$ such that $\bold{T}_x=\bold{Y}(a_{x,1}, \dots, a_{x,k_x+\delta_x};b_{x,1},\dots,b_{x,r_x-\delta_x+\delta})$. For each $x\in\MB{p}$, we obtain $\{\bold{B}_{x,i}\}_{i\in\MB{p}\setminus\{x\}}$, $\bold{U}_x$, $\bold{A}_{x,x}$, according to the following partition of $\bold{T}_x$,

\begin{equation}\label{eqn: CRS}
\bold{T}_x=\left[
\begin{array}{c|c}
\bold{A}_{x,x} & \begin{array}{c|c|c}
\bold{B}_{x,1} & \dots & \bold{B}_{x,p}
\end{array}
\\
\hline
\bold{U}_x & \bold{Z}_{x}
\end{array}\right],
\end{equation}
where $\bold{A}_{x,x}\in \textup{GF}(q)^{k_x\times r_x}$, $\bold{B}_{x,i}\in \textup{GF}(q)^{k_x\times \delta_i}$, $\bold{U}_x\in \textup{GF}(q)^{\delta_x\times r_x}$. Moreover, $\bold{A}_{x,y}=\bold{B}_{x,y}\bold{U}_y$, for $x \neq y$.

Matrices $\bold{A}_{x,x}$ and $\bold{A}_{x,y}$ are substituted in $\bold{G}$ specified in (\ref{eqn: GenMatDL}), for all $x,y\in\MB{p}$, $x\neq y$. Let $\Se{C}_1$ represent the code with generator matrix $\bold{G}$.
\end{cons}

\begin{lemma}\label{lemma: DLcodedis} Following the notation in \Cref{defi: codeDL}, let $d_{1,x}=r_x-\delta_x+1$, $d_{2,x}=r_x-\delta_x+\delta+1$, for $x\in\MB{p}$. Then, code $\Se{C}_1$ specified in \Cref{cons: CRScons} is an $(\bold{n},\bold{k},\bold{D},p)_q$-code.
\end{lemma}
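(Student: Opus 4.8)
The plan is to verify the two conditions of \Cref{defi: codeDL} for $\Se{C}_1$ by producing, for each $x\in\MB{p}$, an explicit generator matrix for the local code $\Se{C}_x$ and an explicit parity-check matrix for $\Se{A}_x$, and then reading off their minimum distances. The governing observation is that every matrix block relevant to block $x$ is --- possibly after permuting rows and columns --- a sub-matrix of the single Cauchy matrix $\bold{T}_x$, so that total invertibility of Cauchy matrices (and, for the global condition, \Cref{lemma: Good matrix}) does the work. Concretely, fix $x$, write a codeword of $\Se{C}_1$ as $\bold{c}=\bold{m}\bold{G}$ with $\bold{m}=(\bold{m}_1,\dots,\bold{m}_p)$, and read the block-$x$ part off \eqref{eqn: GenMatDL}: $\bold{c}_x=\bigl(\bold{m}_x,\ \bold{m}_x\bold{A}_{x,x}+\sum_{y\neq x}\bold{m}_y\bold{A}_{y,x}\bigr)$. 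Substituting the defining relation $\bold{A}_{y,x}=\bold{B}_{y,x}\bold{U}_x$ collapses the cross-terms into $\bigl(\sum_{y\neq x}\bold{m}_y\bold{B}_{y,x}\bigr)\bold{U}_x$, i.e.\ they contribute to $\bold{c}_x$ nothing beyond an element of the row space of $\bold{U}_x$.

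Hence, provided $\sum_{y\neq x}\bold{m}_y\bold{B}_{y,x}$ sweeps out all of $\textup{GF}(q)^{\delta_x}$ --- equivalently, the vertically stacked matrix $(\bold{B}_{y,x})_{y\neq x}$ has full column rank $\delta_x$ --- the local code $\Se{C}_x$ is exactly the row space of
\[
\bold{G}_x=\left[\begin{array}{c|c}\bold{I}_{k_x}&\bold{A}_{x,x}\\\hline \bold{0}&\bold{U}_x\end{array}\right].
\]
For condition (1) I would then show that $\bold{G}_x$ generates an MDS code of length $n_x$, dimension $k_x+\delta_x$, and distance $r_x-\delta_x+1=d_{1,x}$: every $(k_x+\delta_x)$-subset of the columns of $\bold{G}_x$ is linearly independent, because after using the unit columns in the left block to clear the matching rows one is left with a square sub-matrix of the $(k_x+\delta_x)\times r_x$ matrix formed by stacking $\bold{A}_{x,x}$ over $\bold{U}_x$ --- which is $\bold{T}_x$ restricted to its first $r_x$ columns, itself a Cauchy matrix, hence invertible on every square block. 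That the distance equals $d_{1,x}$ (and does not merely exceed it) follows by taking $\bold{m}_x=\bold{0}$ together with a minimum-weight message of the $(r_x,\delta_x,r_x-\delta_x+1)_q$ MDS code generated by $\bold{U}_x$.

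For condition (2), a codeword of $\Se{C}_1$ whose every block $y\neq x$ vanishes forces $\bold{m}_y=\bold{0}$ for all $y\neq x$, and then, because $\bold{U}_y$ has full row rank $\delta_y$, it also forces $\bold{m}_x\bold{B}_{x,y}=\bold{0}$ for every $y\neq x$; therefore
\[
\Se{A}_x=\bigl\{(\bold{m}_x,\ \bold{m}_x\bold{A}_{x,x})\ :\ \bold{m}_x\bold{B}_x=\bold{0}\bigr\},
\]
where $\bold{B}_x$ is the horizontal concatenation of the blocks $\bold{B}_{x,i}$ over $i\in\MB{p}\setminus\{x\}$. Now the concatenation of $\bold{A}_{x,x}$ and $\bold{B}_x$ is precisely $\bold{T}_x$ restricted to its first $k_x$ rows, hence a Cauchy matrix $\bold{A}\in\textup{GF}(q)^{k_x\times(r_x-\delta_x+\delta)}$ whose first $r_x$ columns are $\bold{A}_{x,x}$ and whose remaining columns are $\bold{B}_x$; under this identification $\Se{A}_x$ is exactly the code of \Cref{lemma: Good matrix} with $s=k_x$, $t=r_x-\delta_x+\delta$, and $r=r_x$ (its parity-check matrix having the form displayed there). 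The hypotheses $t-s<r\leq t$ reduce to $k_x>\delta-\delta_x$ (forced by $d_{2,x}\leq n_x$) and $\delta\geq\delta_x$ (immediate), so \Cref{lemma: Good matrix} gives that $\Se{A}_x$ is an $(s+r,\,s+r-t,\,t+1)_q=(n_x,\,k_x-\delta+\delta_x,\,r_x-\delta_x+\delta+1)_q$-code, and $r_x-\delta_x+\delta+1=d_{2,x}$, as required.

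Most of this is routine bookkeeping: reading $\bold{c}_x$ correctly off \eqref{eqn: GenMatDL}, keeping straight that only the blocks $\bold{B}_{x,i}$ with $i\neq x$ appear, and aligning the partition \eqref{eqn: CRS} of $\bold{T}_x$ with the shape required by \Cref{lemma: Good matrix}. The one genuinely substantive step is the rank claim invoked in condition (1): without it one only concludes that $\Se{C}_x$ is a subcode of the MDS code above, and its minimum distance might strictly exceed $d_{1,x}$; so I would pin down, from the prescription of the Cauchy evaluation points in \Cref{cons: CRScons}, that $(\bold{B}_{y,x})_{y\neq x}$ indeed has full column rank $\delta_x$ --- for which it suffices, e.g., that $k_y\geq\delta_x$ for some $y\neq x$, since a $k_y\times\delta_x$ Cauchy sub-matrix is then already of rank $\delta_x$.
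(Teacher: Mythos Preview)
Your argument is correct and takes a route that differs from the paper's sketch. For the local distance the paper extends $\bold{c}_x$ to $\tilde{\bold{c}}_x=[\bold{c}_x,\bold{y}_x]$, exhibits a parity-check matrix $\bold{H}_x^{\Tx{L}}$ for the extended word (built from the first $r_x$ columns of $\bold{T}_x$), invokes \Cref{lemma: Good matrix} to obtain distance $r_x+1$ on this longer code, and then treats the $\delta_x$ entries of $\bold{y}_x$ as extra erasures; you instead write down a generator $\bold{G}_x$ for $\Se{C}_x$ and verify the MDS property directly by Laplace expansion. These are dual readings of the same Cauchy sub-matrix, but yours addresses the \emph{equality} $d_{1,x}=r_x-\delta_x+1$ (and the attendant full-rank condition on the stacked $\bold{B}_{y,x}$) more carefully than the paper's sketch, which only argues the erasure-correction lower bound. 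For the global distance the paper argues operationally---assume all other blocks are locally decoded, subtract $\bold{y}_x\bold{U}_x$ from the parity part, and apply a second parity-check matrix $\bold{H}_x^{\Tx{G}}$---whereas you identify $\Se{A}_x$ algebraically by pushing the zero-block constraints through the full-row-rank matrices $\bold{U}_y$ and then recognise its parity-check matrix as an instance of \Cref{lemma: Good matrix}. The paper's framing dovetails with the decoding procedures illustrated in its worked examples; yours verifies \Cref{defi: codeDL} more literally and, in passing, makes transparent that $\dim\Se{C}_x=k_x+\delta_x$ and $\dim\Se{A}_x=k_x-(\delta-\delta_x)$ rather than the $k_x$ nominally stated there.
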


\begin{proof}[Sketch of the proof] For each $x\in \MB{p}$, define $\bold{y}_x=\sum\nolimits_{y\in\MB{p},y\neq x} \bold{m}_y\bold{B}_{y,x}$. It follows from $\bold{m}\bold{G}=\bold{c}$ and (\ref{eqn: GenMatDL}) that for $x\in\MB{p}$, $\bold{c}_x=\MB{\bold{m}_x,\bold{m}_x\bold{A}_{x,x}+\bold{y}_x\bold{U}_x}$. Define the local parity-check matrix $\bold{H}^{\Tx{L}}_x$ and the global parity-check matrix $\bold{H}^{\Tx{G}}_x$, for each $x\in\MB{p}$, as follows:
\begin{equation*}
\bold{H}_x^{\Tx{G}}=\left[
\begin{array}{c|c}
\bold{A}_{x,x} & \begin{array}{c|c|c}
\bold{B}_{x,1} & \dots & \bold{B}_{x,p}
\end{array}
\\
\hline
-\bold{I}_{r_x} & \bold{0}_{r_x \times \delta-\delta_x}
\end{array}\right]^{\Tx{T}}, \bold{H}^{\Tx{L}}_x=\left[\begin{array}{ccc}
\bold{A}_{x,x}\\
-\bold{I}_{r_x}\\
\bold{U}_x\\
\end{array}\right]^{\Tx{T}}.
\end{equation*}
We next prove the equations of the local distance $d_{1,x}=r_x-\delta_x+1$ and the global distance $d_{2,x}=r_x-\delta_x+\delta+1$ using $\bold{H}^{\Tx{L}}_x$ and $\bold{H}^{\Tx{G}}_x$, $x\in\MB{p}$.

To prove the equation of the local distance, let $\tilde{\bold{c}}_x=\MB{\bold{c}_x,\bold{y}_x}$. Then, one can show that $\tilde{\bold{c}}_x$ belongs to a code $\Se{C}_x^{\Tx{L}}$ with the local parity-check matrix $\bold{H}^{\Tx{L}}_x$. From \Cref{lemma: Good matrix}, $\Se{C}_x^{\Tx{L}}$ is an $(n_x+\delta_x,k_x,r_x+1)_q$-code. Therefore, any $r_x$ erasures in $\tilde{\bold{c}}_x$ are correctable. Provided that $\bold{y}_x$ has length $\delta_x$, we can consider the entries of $\bold{y}_x$ as erasures and thus any $(r_x-\delta_x)$ erasures in the remaining part of $\tilde{\bold{c}}_x$, i.e., $\bold{c}_x$, can be corrected. Therefore, $d_{1,x}=r_x-\delta_x+1$.

To prove the equation of the global distance, assume all the local codewords except for $\bold{c}_x$ are successfully decodable locally. Then, for each $x\in\MB{p}$, $\bold{y}_x$ and $\bold{s}_{x}=\MB{\bold{m}_x\bold{B}_{x,1},\dots,\bold{m}_x\bold{B}_{x,p}}$ are computable. Let $\bar{\bold{c}}_x=\bold{c}_x-\MB{\bold{0}_{k_x},\bold{y}_x\bold{U}_x}$, then one can show that $\bold{H}^{\Tx{G}}_x\bar{\bold{c}}_x^{\Tx{T}}=\MB{\bold{0}_{r_x}, \bold{s}_{x}}^{\Tx{T}}$. From \Cref{lemma: Good matrix} and from the construction of $\bold{H}^{\Tx{G}}_x$, any $(r_x-\delta_x+\delta)$ erasures in $\bar{\bold{c}}_x$ are correctable, thus $(r_x-\delta_x+\delta)$ erasures in $\bold{c}_x$ are also correctable. Therefore, $d_{2,x}=r_x-\delta_x+\delta+1$.
\end{proof}

We next provide a working example for codes in \Cref{cons: CRScons}. For simplicity, we let all the local codeword lengths and local data lengths be equal. However, the construction itself allows them to be unequal.

\begin{table}
\centering
\caption{Polynomial and normal forms of $\textup{GF}(2^4)$}
\begin{tabular}{|c|c||c|c||c|c||c|c|}
\hline
$0$ & $0000$ & $\beta^4$ & $1100$ & $\beta^{8}$ & $1010$ & $\beta^{12}$ & $1111$\\
\hline
$\beta$ & $0100$ & $\beta^{5}$ & $0110$ & $\beta^{9}$ & $0101$ & $\beta^{13}$ & $1011$\\
\hline
$\beta^2$ & $0010$ & $\beta^{6}$ & $0011$ & $\beta^{10}$ & $1110$ & $\beta^{14}$ & $1001$\\
\hline
$\beta^3$ & $0001$ & $\beta^{7}$ & $1101$ & $\beta^{11}$ & $0111$ & $\beta^{15}=1$ & $1000$\\
\hline
\end{tabular}
\label{table: GF}
\end{table}

\begin{figure*}
\normalsize
\setcounter{equation}{4}
\begin{equation}\label{eqn: exam1}\small
\bold{T}_{1}=\bold{T}_{2}=\left[\begin{array}{c|c}
\bold{A}_{1,1} & \bold{B}_{1,2}\\
\hline
\bold{U}_1 & \bold{Z}_1
\end{array}\right]=\left[\begin{array}{c|c}
\bold{A}_{2,2} & \bold{B}_{2,1}\\
\hline
\bold{U}_2 & \bold{Z}_2
\end{array}\right]
=\left[\begin{array}{ccc|c}
\frac{1}{\beta^{}-\beta^{8}} & \frac{1}{\beta^{}-\beta^{9}} & \frac{1}{\beta^{}-\beta^{10}} & \frac{1}{\beta-\beta^{11}}\\
\frac{1}{\beta^{2}-\beta^{8}} & \frac{1}{\beta^{2}-\beta^{9}} & \frac{1}{\beta^{2}-\beta^{10}} &\frac{1}{\beta^{2}-\beta^{11}}\\
\frac{1}{\beta^{3}-\beta^{8}} & \frac{1}{\beta^{3}-\beta^{9}} & \frac{1}{\beta^{3}-\beta^{10}} &\frac{1}{\beta^{3}-\beta^{11}} \\
\hline
\frac{1}{\beta^{7}-\beta^{8}} & \frac{1}{\beta^{7}-\beta^{9}} & \frac{1}{\beta^{7}-\beta^{10}} & \frac{1}{\beta^{7}-\beta^{11}}
\end{array}\right]=\left[\begin{array}{ccc|c}
\beta^{5} &\beta^{12} & \beta^{7} & \beta^{9}\\
1 &\beta^{4} & \beta^{11} & \beta^{6}\\
\beta^{2} &\beta^{14} & \beta^{3} & \beta^{10}\\
\hline
\beta^{4} & 1 & \beta^{9} & \beta^{7} 
\end{array}
\right].
\end{equation}
\hrulefill
\setcounter{equation}{5}
\end{figure*}

\begin{exam} \label{exam: CodeDL} Let $q=2^4$, $p=2$, $r=r_1=r_2=3$, $\delta'=\delta_1=\delta_2=1$, $k=k_1=k_2=3$, $n=n_1=n_2=k+r=6$, $\delta=\delta_1+\delta_2=2$. Then, $d_1=r-\delta'+1=3-1+1=3$, $d_2=r-\delta'+\delta+1=3-1+2+1=5$. Choose a primitive polynomial over $\textup{GF}(2)$: $g(X)=X^4+X+1$. Let $\beta$ be a root of $g(X)$, then $\beta$ is a primitive element of $\textup{GF}(2^4)$. The binary representation of all the symbols in $\textup{GF}(2^4)$ is specified in \Cref{table: GF}. 

Let $\bold{A}_{1,1}=\bold{A}_{2,2}$, $\bold{B}_{1,2}=\bold{B}_{2,1}$, $\bold{U}_1=\bold{U}_2$, and $\bold{T}_1=\bold{T}_2$ as specified in (\ref{eqn: exam1}).
Therefore, 
\begin{equation*}\bold{A}_{1,2}=\bold{A}_{2,1}=\bold{B}_{2,1}\bold{U}_1=\left[\begin{array}{ccc}
\beta^{13} &\beta^{9} & \beta^{3}\\
\beta^{10} &\beta^{6} & 1\\
\beta^{14} &\beta^{10} & \beta^{4}
\end{array}
\right].
\end{equation*}

Then, the generator matrix $\bold{G}$ is specified as follows,
\begin{equation*}\small
\left[
\begin{array}{ccc|ccc|ccc|ccc}
1 & 0 & 0 & \beta^{5} &\beta^{12} & \beta^{7} & 0 & 0 & 0 & \beta^{13} &\beta^{9} & \beta^{3}\\
0 & 1 & 0 & 1 &\beta^{4} & \beta^{11} & 0 & 0 & 0 & \beta^{10} &\beta^{6} & 1 \\
0 & 0 & 1 & \beta^{2} &\beta^{14} & \beta^{3} & 0 & 0 & 0 & \beta^{14} &\beta^{10} & \beta^{4} \\
\hline
0 & 0 & 0 & \beta^{13} &\beta^{9} & \beta^{3} & 1 & 0 & 0 & \beta^{5} &\beta^{12} & \beta^{7} \\
0 & 0 & 0 &\beta^{10} &\beta^{6} & 1 & 0 & 1 & 0 & 1 &\beta^{4} & \beta^{11} \\
0 & 0 & 0 & \beta^{14} &\beta^{10} & \beta^{4} & 0 & 0 & 1 & \beta^{2} &\beta^{14} & \beta^{3} \\
\end{array}\right].
\end{equation*}
Suppose $\bold{m}_1=(1,\beta,\beta^2)$, $\bold{m}_2=(\beta,1,0)$, then $\bold{c}_1=(1,\beta,\beta^2,\beta^{14},0,0)$ and $\bold{c}_2=(\beta,1,0,\beta^{6},0,\beta^{13})$. Moreover, $\bold{H}_1^{\Tx{L}}$ and $\bold{H}_1^{\Tx{G}}$ are specified as follows,
\begin{equation*}\small
\bold{H}_1^{\Tx{G}}=\left[\hspace{-0.1cm}\begin{array}{cccc}
\beta^5 & \beta^{12} & \beta^7 & \beta^9\\
1 & \beta^{4} & \beta^{11} & \beta^6 \\
\beta^2 & \beta^{14} & \beta^3 &\beta^{10}\\
1 & 0 & 0 & 0 \\
0 & 1 & 0 & 0 \\
0 & 0 & 1 & 0
\end{array}\hspace{-0.1cm}\right]^{\Tx{T}},\bold{H}_1^{\Tx{L}}=\left[\hspace{-0.1cm}\begin{array}{cccc}
\beta^5 & \beta^{12} & \beta^7 \\
1 & \beta^{4} & \beta^{11}\\
\beta^2 & \beta^{14} & \beta^3\\
1 & 0 & 0 \\
0 & 1 & 0 \\
0 & 0 & 1 \\
\beta^4 & 1 & \beta^9
\end{array}\hspace{-0.1cm}\right]^{\Tx{T}}.
\end{equation*}
According to \Cref{cons: CRScons}, $\bold{G}$ is the generator matrix of a double-level accessible code that corrects $2$ local erasures by local access and corrects $2$ extra erasures within a single local cloud by global access. In the following, we denote the erased version of $\bold{c}_1$ by $\bold{c}'_1$, and erased symbols by $e_i$, $i\in\Db{N}$.

As an example of decoding by local access, suppose $\bold{c}'_1=(1,e_1,\beta^2,e_2,0,0)$. Then, the erased elements of $\tilde{\bold{c}}_1=(1,e_1,\beta^2,e_2,0,0,e_3)$ can be retrieved using $\bold{H}_1^{\Tx{L}}$ as the parity-check matrix. In particular, we solve $\bold{H}^{\Tx{L}}_1\tilde{\bold{c}}_1^{\Tx{T}}=(0,0,0)^{\Tx{T}}$ for $e_1,e_2,e_3$ and obtain $(e_1,e_2,e_3)=(\beta,\beta^{14},\beta^7)$. We have decoded $\bold{c}_1$ successfully.

As an example of decoding by global access, suppose $\bold{c}'_1=(e_1,e_2,\beta^2,e_3,e_4,0)$, and $\bold{c}_2$ has been locally decoded successfully. Then, $\bold{c}_2=(\beta,1,0,\beta^{6},0,\beta^{13})$ implies that $\bold{m}_1\bold{B}_{1,2}\bold{U}_2=(\beta^{6},0,\beta^{13})-\beta\cdot(\beta^5,\beta^{12},\beta^7)-1\cdot(1,\beta^4,\beta^{11})=(1,\beta^{11},\beta^{5})$. Since $\bold{U}_2=(\beta^{4},1,\beta^9)$, we obtain $\bold{m}_1\bold{B}_{1,2}=\beta^{11}$. Moreover, we compute $\bold{m}_2\bold{B}_{2,1}\bold{U}_1=(\beta^{11},\beta^{7},\beta)$. Let $\bar{\bold{c}}_1=\bold{c}'_1-(0,0,0,\beta^{11},\beta^{7},\beta)=(e'_1,e'_2,\beta^2,e'_3,e'_4,\beta)$. Then, we solve $\bold{H}^{\Tx{G}}_1\bar{\bold{c}}_1^{\Tx{T}}=(0,0,0,\beta^{11})^{\Tx{T}}$ and obtain $(e'_1,e'_2,e'_3,e'_4)=(1,\beta,\beta^{10},\beta^7)$. Therefore, $e_1=e'_1=1$, $e_2=e'_2=\beta$, $e_3=e'_3+\beta^{11}=\beta^{14}$, $e_4=e'_4+\beta^7=0$, and we have decoded $\bold{c}_1$ successfully.
\end{exam}

\subsection{Codes with Hierarchical Locality}
\label{subsection: codes with hierarchical locality}
Based on the double-level accessible codes presented in \Cref{subsection: DLaccessCodes}, we present a class of codes with hierarchical locality in \Cref{cons: ConsHL}. For simplicity, we just present a construction with triple-level access. Note that the coding scheme itself can be naturally extended to have more than three levels.

As described in \Cref{defi: codeHL}, in the triple-level structure, the set of local clouds is partitioned into $p_0$ groups that are indexed by the first-level index $x\in\MB{p_0}$. These groups are further divided into $p_1,p_2,\dots,p_{p_0}$ local clouds, respectively, and the local clouds within group $x$ are indexed by the second-level index $i\in\MB{p_x}$. Therefore, each local cloud is indexed by the pair $(x,i)$. In the following discussion, the parameters with subscript $(x,y;i,j)$ are determined via the two local clouds indexed by $(x,i)$ and $(y,j)$. The subscript $(x,y;i)$ is an abbreviated version of $(x,y;i,1),(x,y;i,2),\dots,(x,y;i,p_y)$, and the parameters with subscript $(x,y;i)$ are determined via the local cloud $(x,i)$ and all the local clouds in the $y$-th group. Lastly, we define a new notation, $(x,y;i;s)$, that indexes the parameters determined via the local cloud $(x,i)$ and some other local clouds in the $y$-th group (not necessarily all of them). Note that this notation bares similarity to $(x,y;i,j)$. However, they are different notations: the index $s$ indexes a subgroup of local clouds not a single one as done by $j$.

A generator matrix of such a code is as follows:
\begin{equation}\label{eqn: GenMatHL}
\bold{G}=\left[\begin{array}{c|c|c|c}
\bold{F}_{1,1} & \bold{F}_{1,2} & \dots & \bold{F}_{1,p_0}\\
\hline
\bold{F}_{2,1} & \bold{F}_{2,2} & \dots & \bold{F}_{2,p_0}\\
\hline
\vdots & \vdots & \ddots & \vdots \\
\hline
\bold{F}_{p_0,1} & \bold{F}_{p_0,2} & \dots & \bold{F}_{p_0,p_0}\\
\end{array}
\right], 
\end{equation}
where for any $x\in\MB{p_0}$,
\begin{equation}\label{eqn: MainMatHL}
\bold{F}_{x,x}=\left[\begin{array}{c|c|c|c|c}
\bold{I}_{k_{x,1}} & \bold{A}_{x,x;1,1} & \dots &\bold{0} & \bold{A}_{x,x;1,p_x}\\
\hline
\vdots & \ddots & \ddots & \vdots & \vdots\\
\hline
\bold{0} & \bold{A}_{x,x;p_x,1} & \dots &\bold{I}_{k_{x,p_x}} & \bold{A}_{x,x;p_x,p_x}\\
\end{array}
\right],
\end{equation}
is a generator matrix of a code offering double-level access, and 
\begin{equation}\label{eqn: CrossMatHL}
\bold{F}_{x,y}=\left[\begin{array}{c|c|c|c|c}
\bold{0} & \bold{A}_{x,y;1,1} & \dots &\bold{0} & \bold{A}_{x,y;1,p_y}\\
\hline
\vdots & \ddots & \ddots & \vdots & \vdots\\
\hline
\bold{0} & \bold{A}_{x,y;p_x,1} & \dots &\bold{0} & \bold{A}_{x,y;p_x,p_y}\\
\end{array}
\right].
\end{equation}
Properties of $\bold{F}_{x,x},\bold{F}_{x,y}$ are to be discussed later.

\begin{cons}\label{cons: ConsHL} Let $p_0\in\Db{N}$, $\bold{p} = (p_1,\dots,p_{p_0}) \in\Db{N}^{p_0}$. Let $k_{x,i}, n_{x,i}, \delta_{x,i}, \gamma_{x}\in \Db{N}$, for $x\in \MB{p_0}$ and $i\in \MB{p_x}$, such that $r_{x,i}=n_{x,i}-k_{x,i}>0$ and $2\gamma_x<\min\nolimits_{i\in \MB{p_x}}\{r_{x,i}-\delta_{x,i}\}$. Let $\delta_x=\delta_{x,1}+\cdots+\delta_{x,p_x}$, $\gamma=\sum\nolimits_{x\in \MB{p_0}} p_x\gamma_x$, for all $x\in\MB{p_0}$. Let $GF(q)$ be a finite field such that $q\geq \max\limits_{x\in\MB{p_0},i\in\MB{p_x}}\{n_{x,i}+\delta_x-(p_x-2)\gamma_x+\gamma\}$.

Let $u_{x,i}=k_{x,i}+\delta_{x,i}+2\gamma_x$, $v_{x,i}=r_{x,i}-\delta_{x,i}+\delta_x-p_x\gamma_x+\gamma$, for $x\in\MB{p_0}$, $i\in\MB{p_x}$. For each $x\in\MB{p_0}$, $i\in \MB{p_x}$, let $a_{x,i,s}, b_{x,i,t}$, $s\in \MB{u_{x,i}}$, $t\in\MB{v_{x,i}}$, be distinct elements of $\textup{GF}(q)$.

Consider the Cauchy matrix $\bold{T}_{x,i}$ on $\textup{GF}(q)^{u_{x,i}\times v_{x,i}}$ such that $\bold{T}_{x,i}=\bold{Y}(a_{x,i,1},\dots,a_{x,i,u_{x,i}}; b_{x,i,1},\dots,b_{x,i,v_{x,i}})$, for $x\in\MB{p_0}$, $i\in\MB{p_x}$. Then, we obtain $\bold{A}_{x,x;i,i}$, $\bold{B}_{x,x;i,i'}$, $\bold{E}_{x,y;i;j}$, $\bold{U}_{x,i}$, $\bold{V}_{x,i}$, $x\in\MB{p_0}$, $i'\in\MB{p_x}\setminus\LB{i}$, $y\in \MB{p_0}\setminus \LB{x}$, $j\in\MB{p_y}$, according to the following partition of $\bold{T}_{x,i}$,

\begin{equation}\label{eqn: CRSHL}
\bold{T}_{x,i}=\left[
\begin{array}{c|c}
\bold{A}_{x,x;i,i} & \begin{array}{c|c|c|c}
\bold{B}_{x,x;i} & \bold{E}_{x,1;i} & \dots & \bold{E}_{x,p_0;i}
\end{array}
\\
\hline
\begin{array}{c}\bold{U}_{x,i}\\
\hline 
\bold{V}_{x,i}\end{array}& \bold{Z}_{x,i}
\end{array}\right],
\end{equation} 
\begin{equation}
\textit{where } \text{ } \bold{B}_{x,x;i}=\left[\begin{array}{c|c|c}\bold{B}_{x,x;i,1} & \dots & \bold{B}_{x,x;i,p_x}
\end{array}
\right]
\end{equation}
\begin{equation}
\textit{and  } \text{ } \bold{E}_{x,y;i}=\left[\begin{array}{c|c|c}\bold{E}_{x,y;i;1} & \dots & \bold{E}_{x,y;i;p_y}
\end{array}
\right], 
\end{equation}
such that $\bold{A}_{x,x;i,i}\in \textup{GF}(q)^{k_{x,i}\times r_{x,i}}$, $\bold{B}_{x,x;i,i'}\in \textup{GF}(q)^{k_{x,i}\times \delta_{x,i'}}$, $\bold{E}_{x,y;i;j}\in \textup{GF}(q)^{k_{x,i}\times \gamma_y}$, $\bold{U}_{x,i}\in \textup{GF}(q)^{\delta_{x,i}\times r_{x,i}}$, $\bold{V}_{x,i}\in \textup{GF}(q)^{2\gamma_x \times r_{x,i}}$. Moreover,  $\bold{A}_{x,x;i,i'}=\bold{B}_{x,x;i,i'}\bold{U}_{x,i'}$. Suppose $\bold{E}_{x,y;i;p_y+1}=\bold{E}_{x,y;i;1}$; let $\bold{A}_{x,y;i,j}=\MB{\bold{E}_{x,y;i;j},\bold{E}_{x,y;i;j+1}}\bold{V}_{y,j}$.

Matrices $\bold{A}_{x,x; i,i}$ and $\bold{A}_{x,y; i,j}$ are substituted in $\bold{F}_{x,x}$ and $\bold{F}_{x,y}$ to construct $\bold{G}$ as specified in (\ref{eqn: GenMatHL}), (\ref{eqn: MainMatHL}), and (\ref{eqn: CrossMatHL}). Let $\Se{C}_2$ represent the code with generator matrix $\bold{G}$.

\end{cons}

\begin{theo}\label{theo: HLdis} Following the notation in \Cref{defi: codeHL}, let $d_{1,x,i}=r_{x,i}-\delta_{x,i}-2\gamma_x+1$, $d_{2,x,i}=r_{x,i}-\delta_{x,i}+\delta_x+1$, $d_{3,x,i}=r_{x,i}-\delta_{x,i}+\delta_x-p_x\gamma_x+\gamma+1$, for $x\in\MB{p_0}$, $i\in\MB{p_x}$. Then, the code $\Se{C}_2$ defined in \Cref{cons: ConsHL} is an $(\bold{n},\bold{k},\bold{D},p_0,\bold{p})_q$-code. 
\end{theo}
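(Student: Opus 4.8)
The plan is to extend the two-level argument behind \Cref{lemma: DLcodedis} by one more layer, verifying the three guarantees of \Cref{defi: codeHL}: the local distances $d_{1,x,i}$, the group distances $d_{2,x,i}$ making each $\Se{C}_x$ an $(\bold{n}_x,\bold{k}_x,\bold{D}_x,p_x)_q$-code, and the global distances $d_{3,x,i}$. I would first expand $\bold{c}=\bold{m}\bold{G}$ via (\ref{eqn: GenMatHL})--(\ref{eqn: CrossMatHL}), $\bold{A}_{x,x;i,i'}=\bold{B}_{x,x;i,i'}\bold{U}_{x,i'}$, and $\bold{A}_{x,y;i,j}=\MB{\bold{E}_{x,y;i;j},\bold{E}_{x,y;i;j+1}}\bold{V}_{y,j}$, obtaining for every cloud $(x,i)$
\[
\bold{c}_{x,i}=\MB{\bold{m}_{x,i},\ \bold{m}_{x,i}\bold{A}_{x,x;i,i}+\bold{y}_{x,i}\bold{U}_{x,i}+\bold{z}_{x,i}\bold{V}_{x,i}},
\]
with \emph{intra-group} word $\bold{y}_{x,i}=\sum_{i'\neq i}\bold{m}_{x,i'}\bold{B}_{x,x;i',i}$ (length $\delta_{x,i}$) and \emph{inter-group} word $\bold{z}_{x,i}=\sum_{y\neq x,\ j}\bold{m}_{y,j}\MB{\bold{E}_{y,x;j;i},\bold{E}_{y,x;j;i+1}}$ (length $2\gamma_x$). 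The identity that drives the proof is that, setting $\bold{w}_{x,\ell}:=\sum_{y\neq x,\ j}\bold{m}_{y,j}\bold{E}_{y,x;j;\ell}$ and reading indices cyclically modulo $p_x$, one has $\bold{z}_{x,i}=\MB{\bold{w}_{x,i},\bold{w}_{x,i+1}}$; thus each interleaving word $\bold{w}_{x,\ell}$ occurs in exactly the two inter-group words $\bold{z}_{x,\ell-1}$ and $\bold{z}_{x,\ell}$.

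The local distance follows the local half of \Cref{lemma: DLcodedis}: after reordering its coordinates, $\MB{\bold{c}_{x,i},\bold{y}_{x,i},\bold{z}_{x,i}}$ equals $\MB{\bold{v},\bold{v}\bold{N}}$ with $\bold{v}=\MB{\bold{m}_{x,i},\bold{y}_{x,i},\bold{z}_{x,i}}$ and $\bold{N}$ the vertical stack of $\bold{A}_{x,x;i,i},\bold{U}_{x,i},\bold{V}_{x,i}$, which is the first $r_{x,i}$ columns of the Cauchy matrix $\bold{T}_{x,i}$; hence by \Cref{lemma: Good matrix}, $\MB{\bold{I},\bold{N}}$ generates a code of minimum distance $r_{x,i}+1$, so declaring the $\delta_{x,i}+2\gamma_x$ coordinates of $\bold{y}_{x,i},\bold{z}_{x,i}$ erased leaves $r_{x,i}-\delta_{x,i}-2\gamma_x=d_{1,x,i}-1$ correctable erasures in $\bold{c}_{x,i}$.

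The group level carries the new mechanism. Suppose every cloud $(x,i')$ with $i'\neq i$ has been locally decoded. From $\bold{c}_{x,i'}$ one reads $\bold{m}_{x,i'}$, hence the value of $\bold{y}_{x,i'}\bold{U}_{x,i'}+\bold{z}_{x,i'}\bold{V}_{x,i'}$; since the vertical stack of $\bold{U}_{x,i'}$ and $\bold{V}_{x,i'}$ is a sub-Cauchy matrix with $\delta_{x,i'}+2\gamma_x\leq r_{x,i'}$ rows --- precisely where the hypothesis $2\gamma_x<\min_i\LB{r_{x,i}-\delta_{x,i}}$ of \Cref{cons: ConsHL} is used --- the map $(\bold{y},\bold{z})\mapsto\bold{y}\bold{U}_{x,i'}+\bold{z}\bold{V}_{x,i'}$ is injective, so $\bold{y}_{x,i'}$ and $\bold{z}_{x,i'}$ are recovered individually. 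As $i'$ ranges over $\MB{p_x}\setminus\LB{i}$ the pairs $\LB{i',i'+1}$ cover $\MB{p_x}$ (for $p_x\geq 2$), so all $\bold{w}_{x,\ell}$, and hence $\bold{z}_{x,i}=\MB{\bold{w}_{x,i},\bold{w}_{x,i+1}}$, become known; $\bold{y}_{x,i}$ is known from the $\bold{m}_{x,i'}$; and subtracting the known terms from the recovered $\bold{y}_{x,i'}$ yields each $\bold{m}_{x,i}\bold{B}_{x,x;i,i'}$, i.e. the syndrome $\bold{s}_{x,i}:=\bold{m}_{x,i}\bold{B}_{x,x;i}$. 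Therefore $\bold{c}_{x,i}-\MB{\bold{0}_{k_{x,i}},\bold{y}_{x,i}\bold{U}_{x,i}+\bold{z}_{x,i}\bold{V}_{x,i}}=\MB{\bold{m}_{x,i},\bold{m}_{x,i}\bold{A}_{x,x;i,i}}$, and appending the now-known $\bold{s}_{x,i}$ exhibits $\MB{\bold{m}_{x,i},\bold{m}_{x,i}\bold{A}_{x,x;i,i},\bold{s}_{x,i}}=\bold{m}_{x,i}\MB{\bold{I}_{k_{x,i}},\bold{A}_{x,x;i,i},\bold{B}_{x,x;i}}$ as a codeword of the code generated by $\MB{\bold{I}_{k_{x,i}},\bold{A}_{x,x;i,i},\bold{B}_{x,x;i}}$; since $\bold{A}_{x,x;i,i}$ and $\bold{B}_{x,x;i}$ together form a $k_{x,i}\times(r_{x,i}-\delta_{x,i}+\delta_x)$ sub-Cauchy matrix, \Cref{lemma: Good matrix} gives minimum distance $r_{x,i}-\delta_{x,i}+\delta_x+1=d_{2,x,i}$, with all erasures confined to the first $n_{x,i}$ coordinates.

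The global level repeats this with one extra step. Assuming every cloud but $(x,i)$ is decoded, running the same injectivity-and-cyclic-covering argument inside each foreign group $y\neq x$ recovers all $\bold{w}_{y,\ell}$, and then --- after subtracting the now-known foreign messages --- all products $\bold{m}_{x,i}\bold{E}_{x,y;i;\ell}$; together with the $\bold{m}_{x,i}\bold{B}_{x,x;i,i'}$ these are $\bold{m}_{x,i}$ against every column of the top $k_{x,i}\times v_{x,i}$ block of $\bold{T}_{x,i}$ except those of $\bold{A}_{x,x;i,i}$. Subtracting $\bold{y}_{x,i}\bold{U}_{x,i}+\bold{z}_{x,i}\bold{V}_{x,i}$ (again recovered as above) from $\bold{c}_{x,i}$ and appending this vector yields a codeword of the code generated by $\bold{I}_{k_{x,i}}$ followed by the full top block of $\bold{T}_{x,i}$, which by \Cref{lemma: Good matrix} has minimum distance $v_{x,i}+1=d_{3,x,i}$. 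What remains is routine bookkeeping: the matrices $\bold{T}_{x,i}$ exist because $q\geq u_{x,i}+v_{x,i}$, exactly the field-size bound in \Cref{cons: ConsHL}, and the inequalities required by \Cref{lemma: Good matrix} reduce here to $k_{x,i}>0$ (with $\delta_x\geq\delta_{x,i}$, $\gamma\geq p_x\gamma_x$ needed only for well-definedness); matching the three levels to the two conditions of \Cref{defi: codeHL} then finishes the proof. I expect the genuinely delicate step to be the cyclic-overlap recovery --- confirming that the blocks $\MB{\bold{E}_{\cdot;\ell},\bold{E}_{\cdot;\ell+1}}$ do let one reconstruct every $\bold{w}$ from any $p_x-1$ clouds of a group, and that injectivity of the sub-Cauchy maps forces the reconstructed syndromes to be the correct ones.
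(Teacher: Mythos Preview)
Your argument is correct and follows essentially the same route as the paper's sketch: expand $\bold{c}_{x,i}$ into message, intra-group, and inter-group contributions, use the injectivity of the sub-Cauchy block $\begin{bmatrix}\bold{U}_{x,i'}\\\bold{V}_{x,i'}\end{bmatrix}$ (guaranteed by $2\gamma_x<\min_{i'}\{r_{x,i'}-\delta_{x,i'}\}$) to separate $\bold{y}_{x,i'}$ from $\bold{z}_{x,i'}$, exploit the cyclic overlap of the $\bold{E}$-blocks to recover the missing inter-group word, and then invoke \Cref{lemma: Good matrix}. The only cosmetic difference is that for $d_{2,x,i}$ the paper packages the last step as a reduction to \Cref{lemma: DLcodedis} via $\tilde{\bold{c}}_x=\bold{m}_x\bold{F}_{x,x}$, whereas you inline that lemma by extracting the syndrome $\bold{s}_{x,i}=\bold{m}_{x,i}\bold{B}_{x,x;i}$ and applying \Cref{lemma: Good matrix} directly; you also spell out the $d_{3,x,i}$ step that the paper leaves to the reader.
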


\begin{proof}[Sketch of the proof] For each $x\in\MB{p_0}$ and $i\in\MB{p_x}$, define the local cross parity $\bold{y}_{x,i}=\sum\nolimits_{i'\in\MB{p_x}\setminus\{i\}}\bold{m}_{x,i'}\bold{B}_{x,x;i,i'}$, and the global cross parities $\bold{\bold{z}}_{x,i}=\sum\nolimits_{y\in\MB{p_0}\setminus \LB{x}, j\in \MB{p_y}}\bold{m}_{y,j}\bold{E}_{y,x;j;i}$. Let $\bold{z}_{x,p_x+1}=\bold{z}_{x,p_x}$. Then, it follows from $\bold{m}\bold{G}=\bold{c}$ that $\bold{c}_{x,i}=\MB{\bold{m}_{x,i},\bold{w}_{x,i}}$ for some $\bold{w}_{x,i}=\bold{m}_{x,i}\bold{A}_{x,x;i,i}+\bold{y}_{x,i}\bold{U}_{x,i}+\MB{\bold{z}_{x,i},\bold{z}_{x,i+1}}\bold{V}_{x,i}$.

The local erasure-correction capability $d_{1,x,i}=r_{x,i}-\delta_{x,i}-2\gamma_x+1$ and the global erasure-correction capability $d_{3,x,i}=r_{x,i}-\delta_{x,i}+\delta_x-p_x\gamma_x+\gamma+1$ can be easily derived by following the same logic used in the proof of \Cref{lemma: DLcodedis}. Therefore, we only need to prove that $d_{2,x,i}=r_{x,i}-\delta_{x,i}+\delta_x+1$.

To prove this statement, suppose all the local codewords in the $x$-th group except for $\bold{c}_{x,i}$ are successfully decodable locally, for some $x\in\MB{p_0}$, $i\in \MB{p_x}$. In other words, for all $i'\in\MB{p_x}\setminus\{i\}$, there are at most $d_{1,x,i'}-1$ erasures in the corrupted version $\bold{c}_{x,i'}$ of the local codeword. From the construction, we know that the row spaces of any two matrices from $\bold{A}_{x,x;i,i}$, $\bold{U}_{x,i}$, and $\bold{V}_{x,i}$ have no common elements except for the all zero vector. Therefore, for all $i'\in\MB{p_x}\setminus\{i\}$, $\bold{m}_{x,i'}$, $\bold{y}_{x,i'}$, $\MB{\bold{z}_{x,i'},\bold{z}_{x,i'+1}}$, can all be derived from $\bold{c}_{x,i}$. This implies that $\MB{\bold{z}_{x,i},\bold{z}_{x,i+1}}$ is known and thus, the entire contribution of global cross parities can be removed. Namely, let $\tilde{\bold{c}}_{x,i'}=\bold{c}_{x,i'}-\MB{\bold{0}_{k_{x,i'}},\MB{\bold{z}_{x,i'},\bold{z}_{x,i'+1}}\bold{V}_{x,i'}}$, for all $i'\in\MB{p_x}$, then the message $\bold{m}_{x}\bold{F}_{x,x}=\tilde{\bold{c}}_{x}$, where $\tilde{\bold{c}}_x=\MB{\tilde{\bold{c}}_{x,1},\dots,\tilde{\bold{c}}_{x,p_x}}$. Thus, from \Cref{lemma: DLcodedis}, $(r_{x,i}-\delta_{x,i}+\delta_x)$ erasures in $\tilde{\bold{c}}_{x,i}$ are correctable. Therefore, $d_{2,x,i}=r_{x,i}-\delta_{x,i}+\delta_x+1$.
\end{proof}

\begin{rem}\label{rem:1} Note that the constraint of $\gamma_y\in\Db{N}$ in \Cref{cons: CRScons} can be relaxed to $2\gamma_y\in\Db{N}$ if $p_y$ is even. In this case, we have $\bold{E}_{x,y;i;j}\in \textup{GF}(q)^{k_{x,i}\times 2\gamma_y}$. Moreover, we need to modify the equation of $\bold{E}_{x,y;i}$ to be $\bold{E}_{x,y;i}=\MB{\bold{E}_{x,y;i;1},\dots,\bold{E}_{x,y;i;p_y/2}}$, and $\bold{A}_{x,y;i,j}=\bold{E}_{x,y;i;\lceil j/2 \rceil}\bold{V}_{y,j}$.
\end{rem}

The following is a working example of \Cref{cons: ConsHL}. For simplicity, we let the middle code be the code presented in \Cref{exam: CodeDL}. However, the construction itself doesn't impose any constraints on $r_{x,i}$, $\delta_{x,i}$, and $\gamma_{x}$, except for $2\gamma_x<\min\nolimits_{y\in \MB{p_x}}\{r_{x,y}-\delta_{x,y}\}$.

\begin{exam}\label{exam: CodeHL} Here, we build on \Cref{exam: CodeDL} using the same $GF(q)$. Let $p_0=2$, $\bold{p}=(p_1,p_2)=(2,2)$, $\gamma'=\gamma_1=\gamma_2=1/2$, $\gamma=p_1\gamma_1+p_2\gamma_2=2$. Let $\bold{F}_{1,1}=\bold{F}_{2,2}=\bold{G}$ of \Cref{exam: CodeDL}. Then, $n=6$, $r=3$, $\delta'=1$, $\delta=2$ as in \Cref{exam: CodeDL}. Therefore, $d_1=r-\delta'-2\gamma'+1=3-1-2\cdot(1/2)+1=2$, $d_2=r-\delta'+\delta+1=5$, $d_3=r-\delta'+\delta-2\gamma'+\gamma+1=6$. We assume $\bold{T}_{x,i}$, $x,i\in\MB{2}$, are all identical, then so are $\bold{V}_{x,i}$ and $\bold{E}_{x,y;i;1}$, $x\neq y$, $i\in\MB{2}$. Let these matrices be defined as follows:
\begin{equation*}
\bold{V}_{x,i}=\left[\begin{array}{ccc}\frac{1}{\beta^{6}-\beta^{8}}& \frac{1}{\beta^{6}-\beta^{9}} & \frac{1}{\beta^{6}-\beta^{10}} \end{array}\right]=\left[\begin{array}{ccc}\beta^{}&\beta^{10}&\beta^{8}\end{array}\right]
\end{equation*}
\begin{equation*}\textit{and } \text{ } \bold{E}_{x,y;i;1}=
\left[\begin{array}{ccc}
\frac{1}{\beta-\beta^{12}} \\
\frac{1}{\beta^{2}-\beta^{12}} \\
\frac{1}{\beta^{3}-\beta^{12}} \\
\end{array}\right]=\left[\begin{array}{ccc}
\beta^{2}\\
\beta^{8}\\
\beta^{5}\\
\end{array}\right].
\end{equation*}
For simplicity, we abbreviate $\bold{E}_{x,y;i;1}$ as $\bold{E}$.
Note that here $p_1$, $p_2$ are even; thus, the construction follows the modification described in \Cref{rem:1}. The components $\bold{A}_{x,y;i,j}$ are therefore all identical for $x,y,i,j\in\MB{2}$, $x\neq y$, and are described as follows:
\begin{equation*}\bold{A}_{x,y;i,j}=\bold{E}\bold{V}_{y,j}=\left[\begin{array}{ccc}
\beta^{3} &\beta^{12} & \beta^{10}\\
\beta^9 &\beta^{3} & \beta^{}\\
\beta^{6} &1& \beta^{13}
\end{array}
\right].
\end{equation*}
Then, the generator matrix is given in (\ref{eqn: exam2}).

\begin{figure*}[!t]
\normalsize
\setcounter{equation}{11}
\begin{equation}\label{eqn: exam2}\small
\left[
\begin{array}{ccc|ccc|ccc|ccc|ccc|ccc|ccc|ccc}
1 & 0 & 0 & \beta^{5} &\beta^{12} & \beta^{7} & 0 & 0 & 0 & \beta^{13} &\beta^{9} & \beta^{3} & 0 & 0 & 0 & \beta^{3} &\beta^{12} & \beta^{10} & 0 & 0 & 0 & \beta^{3} &\beta^{12} & \beta^{10}\\
0 & 1 & 0 & 1 &\beta^{4} & \beta^{11} & 0 & 0 & 0 & \beta^{10} &\beta^{6} & 1  & 0 & 0 & 0 & \beta^9 &\beta^{3} & \beta^{} & 0 & 0 & 0 & \beta^9 &\beta^{3} & \beta^{}\\
0 & 0 & 1 & \beta^{2} &\beta^{14} & \beta^{3} & 0 & 0 & 0 & \beta^{14} &\beta^{10} & \beta^{4}  & 0 & 0 & 0 & \beta^{6} &1& \beta^{13} & 0 & 0 & 0 & \beta^{6} &1& \beta^{13}\\
\hline
0 & 0 & 0 & \beta^{13} &\beta^{9} & \beta^{3} & 1 & 0 & 0 & \beta^{5} &\beta^{12} & \beta^{7}  & 0 & 0 & 0 & \beta^{3} &\beta^{12} & \beta^{10} & 0 & 0 & 0 & \beta^{3} &\beta^{12} & \beta^{10}\\
0 & 0 & 0 & \beta^{10} &\beta^{6} & 1 & 0 & 1 & 0 & 1 &\beta^{4} & \beta^{11}  & 0 & 0 & 0 & \beta^9 &\beta^{3} & \beta^{} & 0 & 0 & 0 & \beta^9 &\beta^{3} & \beta^{}\\
0 & 0 & 0 & \beta^{14} &\beta^{10} & \beta^{4} & 0 & 0 & 1 & \beta^{2} &\beta^{14} & \beta^{3}  & 0 & 0 & 0 & \beta^{6} &1& \beta^{13} & 0 & 0 & 0 & \beta^{6} &1& \beta^{13}\\
\hline
 0 & 0 & 0 & \beta^{3} &\beta^{12} & \beta^{10} & 0 & 0 & 0 & \beta^{3} &\beta^{12} & \beta^{10} & 1 & 0 & 0 & \beta^{5} &\beta^{12} & \beta^{7} & 0 & 0 & 0 & \beta^{13} &\beta^{9} & \beta^{3} \\
0 & 0 & 0 & \beta^9 &\beta^{3} & \beta^{} & 0 & 0 & 0 & \beta^9 &\beta^{3} & \beta^{} & 0 & 1 & 0 & 1 &\beta^{4} & \beta^{11} & 0 & 0 & 0 & \beta^{10} &\beta^{6} & 1 \\
0 & 0 & 0 & \beta^{6} &1& \beta^{13} & 0 & 0 & 0 & \beta^{6} &1& \beta^{13} & 0 & 0 & 1 & \beta^{2} &\beta^{14} & \beta^{3} & 0 & 0 & 0 & \beta^{14} &\beta^{10} & \beta^{4} \\
\hline
0 & 0 & 0 & \beta^{3} &\beta^{12} & \beta^{10} & 0 & 0 & 0 & \beta^{3} &\beta^{12} & \beta^{10} & 0 & 0 & 0 & \beta^{13} &\beta^{9} & \beta^{3} & 1 & 0 & 0 & \beta^{5} &\beta^{12} & \beta^{7} \\
0 & 0 & 0 & \beta^9 &\beta^{3} & \beta^{} & 0 & 0 & 0 & \beta^9 &\beta^{3} & \beta^{} & 0 & 0 & 0 & \beta^{10} &\beta^{6} & 1 & 0 & 1 & 0 & 1 &\beta^{4} & \beta^{11} \\
0 & 0 & 0 & \beta^{6} &1& \beta^{13} & 0 & 0 & 0 & \beta^{6} &1& \beta^{13} & 0 & 0 & 0 & \beta^{14} &\beta^{10} & \beta^{4} & 0 & 0 & 1 & \beta^{2} &\beta^{14} & \beta^{3} \\
\end{array}\right].
\end{equation}
\setcounter{equation}{12}
\hrulefill
\end{figure*}

Note that the decoding process based on local access and global access have already been introduced in \Cref{exam: CodeDL}. Thus, we only focus on decoding based on the middle-level access in this example. Suppose $\bold{m}_{1,1}=(1,\beta^{},\beta^{2})$, $\bold{m}_{1,2}=(\beta,1,0)$, $\bold{m}_{2,1}=(\beta^{2},0,\beta^{})$, $\bold{m}_{2,2}=(0,\beta^{},1)$. Then, $\bold{c}_{1,1}=(1,\beta^{},\beta^{2},\beta^{12},\beta^{14},\beta^{12})$, $\bold{c}_{1,2}=(\beta^{},1,0,\beta^{9},\beta^{14},\beta)$. 

Suppose there are $3$ erasures in $\bold{c}_{1,1}$ so that $\bold{c}'_{1,1}=(e_1,\beta,\beta^2,e_2,e_3,\beta^{12})$, where $e_1,e_2,e_3$ represent the three erased symbols. Suppose $\bold{c}_{1,2}$ is successfully corrected by local access. Then, codeword $\bold{c}_{1,1}$ is correctable through middle-level access, i.e., by operating on $\bold{c}'_{1,1}$ and $\bold{c}_{1,2}$. 

First, from $\bold{c}_{1,2}=(\beta^{},1,0,\beta^{9},\beta^{14},\beta)$, we know that $\bold{m}_{1,2}=(\beta^{},1,0)$. Following the proof of \Cref{theo: HLdis}, we know that $(\beta^{9},\beta^{14},\beta)=\bold{m}_{1,2}\bold{A}_{1,1;1,2}+\bold{y}_{1,2}\bold{U}_{1,2}+\bold{z}_{1,2}\bold{V}_{1,2}$. Here, $\bold{y}_{1,1}=\bold{m}_{1,1}\bold{B}_{1,1;1,2}$, $\bold{z}_{1,2}=(\bold{m}_{2,1}+\bold{m}_{2,2})\bold{E}=\bold{z}_{1,1}$. Then, $\bold{y}_{1,2}$ and $\bold{z}_{1,2}$ can be computed as $\bold{y}_{1,2}=(\beta^{11})$, $\bold{z}_{1,2}=(\beta^4)$. Therefore, $\bold{z}_{1,1}\bold{V}_{1,1}+\bold{m}_{1,2}\bold{A}_{1,1;2,1}=\bold{z}_{1,2}\bold{V}_{1,1}+\bold{m}_{1,2}\bold{A}_{1,1;2,1}=(\beta^5,\beta^{14},\beta^{12})+(\beta^{11},\beta^7,\beta)=(\beta^3,\beta,\beta^{13})$. 

Let $\tilde{\bold{c}}_{1,1}=\bold{c}'_{1,1}-(0,0,0,\beta^3,\beta,\beta^{13})=(e'_1,\beta,\beta^2,e'_2,e'_3,\beta)$. We obtain $(e'_1,e'_2,e'_3)=(1,\beta^{10},\beta^{7})$ by solving $\bold{H}^{\Tx{G}}_1\tilde{\bold{c}}_{1,1}^{\Tx{T}}=(0,0,0,e^{11})^{\Tx{T}}$, where $\bold{H}^{\Tx{G}}_1$ is specified in \Cref{exam: CodeDL}. Therefore, $e_1=e'_1=1$, $e_2=e'_2+\beta^3=\beta^{12}$, $e_3=e'_3+\beta=\beta^{14}$. We have successfully decoded $\bold{c}_{1,1}$.
\end{exam}

\section{Scalability, Heterogeneity, and Flexibility}
\label{section: scalability and flexibility in cloud storage}

In \Cref{section: codes for multi-level access}, we have presented a construction of codes with hierarchical locality for cloud storage, which enables the system to offer multi-level access. However, multi-level accessibility is not the only property that is considered in practical cloud storage applications. In this section, we therefore discuss scalability, heterogeneity, and flexibility of our construction, which are pivotal particularly in dynamic cloud storage. Although our discussion is restricted to cloud storage, the properties of heterogeneity and flexibility are also of practical importance in non-volatile memories.

\subsection{Scalability}
\label{subsection: scalability}
As discussed in \Cref{section: introduction}, scalability refers to the capability of expanding the backbone network to accommodate additional workload without rebuilding the entire infrastructure. More specifically, when a new local cloud is added to the existing configuration, computing a completely different generator matrix resulting in changing all the encoding-decoding components in the system is very costly. The ideal scenario is that adding a new local cloud does not change the encoding-decoding components of the already-existing, local clouds. 

We show that our construction naturally achieves this goal. Observe that in \Cref{cons: CRScons}, the components $\bold{A}_{x,x}$, $\bold{U}_x$, $\bold{B}_{x,i}$, $i\in\MB{p}\setminus\LB{x}$ are built locally. Suppose cloud $p+1$ is added into a double-level configuration adopting \Cref{cons: CRScons}. The following steps will only result in adding some columns and rows to the original $\bold{G}$ without changing the existing ones:

\begin{enumerate}
\item \emph{Parameter Selection:} Local cloud $p+1$ chooses its local parameters $\bold{A}_{p+1,p+1}$, $\bold{U}_{p+1}$, $\bold{B}_{p+1,i}$, $i\in\MB{p}$, and local cloud $i$ chooses the additional local parameters $\bold{B}_{i,p+1}$;
\item \emph{Information Exchange:} Local cloud $p+1$ sends $\bold{m}_{p+1}\bold{B}_{p+1,i}$ to the central cloud, and local cloud $i$ sends $\bold{m}_i\bold{B}_{i,p+1}$ to the central cloud;
\item \emph{Information Exchange:} The central cloud forwards $\bold{m}_{p+1}\bold{B}_{p+1,i}$ to local cloud $i$, and sends $\bold{y}_{p+1}=\sum\nolimits_{i\in\MB{p}} \bold{m}_i\bold{B}_{i,p+1}$ to local cloud $p+1$; 
\item \emph{Update:} Local cloud $p+1$ computes its finalized parity-check symbols $\bold{m}_{p+1}\bold{A}_{p+1,p+1}+\bold{y}_{p+1}\bold{U}_{p+1}$, and local cloud $i$ adds $\bold{m}_{p+1}\bold{B}_{p+1,i}$ to its current parity symbols.
\end{enumerate}

Note that although the local erasure-correction capability of a local cloud does not change, the global erasure-correction capability of each local cloud increases by $\delta_{p+1}$ after adding the new local cloud $p+1$ into the system.

\subsection{Heterogeneity}
\label{subsection: heterogeneity}
While codes with identical data length and locality have been intensively studied, heterogeneity has become increasingly important in real world applications, especially in cloud storage. There are typically two forms of heterogeneity: the heterogeneity of the network structure, and unequal usage rates (according to how hot the data stored are) of local components. It is reasonable to assume a heterogeneous structure since components connected to a larger network are typically geographically separated and they often store data from unrelated sources. Heterogeneous networks naturally require codes with different local code lengths and nonidentical data lengths, corresponding to flexible $n_x$ and $k_x$ in our construction, respectively. Unequal protection of data, corresponding to flexible $r_x$ and $\delta_x$, also has received increasing attention in recent years. This observation is reasonable since the usage rate of the data is not necessarily identical. Clouds storing hot data (data with higher usage rate and more time urgency) should receive more local protection than those store cold data.

Although the examples we presented in \Cref{section: codes for multi-level access} have identical local parameters among all the clouds for simplicity, \Cref{cons: CRScons} and \Cref{cons: ConsHL} do not impose such restrictions, and they are actually suitable for heterogeneous configuration. 

\begin{exam}\label{exam: hetero} Here, we build on \Cref{exam: defi2} and we use the same parameters. In this example, $n_x$, $k_x$, $\delta_x$ are not identical for all $x$.
 Let $(\delta_{1,1},\delta_{1,2})=(1,1)$; thus, $\delta_1=1+1=2$. Let $(\delta_{2,1},\delta_{2,2})=(1,1)$; thus, $\delta_2=1+1=2$. Let $(\delta_{3,1},\delta_{3,2},\delta_{3,3},\delta_{3,4})=(1,2,1,1)$; thus, $\delta_3=1+2+1+1=5$.

Let $\gamma_1=1$ and $\gamma_2=\gamma_3=1/2$; thus, $\gamma=2\cdot (1)+2\cdot(1/2)+4\cdot (1/2)=5$.

Then, $d_{1,1,1}=r_{1,1}-\delta_{1,1}-2\gamma_1+1=4-1-2\cdot 1+1=2$; $d_{2,1,1}=r_{1,1}-\delta_{1,1}+\delta_1+1=4-1+2+1=6$; $d_{3,1,1}=r_{1,1}-\delta_{1,1}+\delta_1-p_1\gamma_1+\gamma+1=4-1+2-2\cdot 1+5+1=9$. The rest of the parameters can be obtained in a similar fashion, and we then specify $\bold{D}$ as follows:
\begin{equation}
\bold{D}=\left[\begin{array}{cc|cc|cccc}
2 &3 & 2 & 2 & 2 & 2 & 2 & 2\\ 
\hline
6 &7 & 5 & 5 & 8 & 8 & 8 & 8\\ 
\hline
9 &10 & 9 & 9 & 11 & 11 & 11 & 11 
\end{array}\right].
\end{equation}

According to \Cref{cons: ConsHL}, one can construct an $(\bold{n},\bold{k},\bold{D},p_0,\bold{p})_q$-code with the parameters specified previously.
\end{exam}

\subsection{Flexibility}
\label{subsection: flexibility}
The concept of flexibility has been originally proposed and investigated for dynamic cloud storage in \cite{martnez2018universal}. In a dynamic cloud storage system, the usage rate of a piece of data is not likely to remain unchanged. When the data stored in a local cloud become hot, splitting the local cloud into two smaller clouds effectively reduces the latency. However, this action should be done without reducing the erasure-correction capability of the rest of the system or changing the remaining components.

Take \Cref{cons: CRScons} as an example, if the data stored in local cloud $1$ becomes unexpectedly hot, then the following procedure splits it into two separate clouds $1^{\Tx{a}}$ and $1^{\Tx{b}}$:

\begin{enumerate}
\item Select the desired local parameters $(k_1^{\Tx{a}},r_1^{\Tx{a}},\delta_1^{\Tx{a}})$ and $(k_1^{\Tx{b}},r_1^{\Tx{b}},\delta_1^{\Tx{b}})$ for clouds $1^{\Tx{a}}$ and $1^{\Tx{b}}$, respectively, such that $k_1^{\Tx{a}}+k_1^{\Tx{b}}=k_1$, $r_1^{\Tx{a}}+r_1^{\Tx{b}}=r_1$, $\delta_1^{\Tx{a}}+\delta_1^{\Tx{b}}=\delta_1$, and 
\begin{align*}
\bold{A}_{1^{\Tx{a}},1^{\Tx{a}}}&=\bold{A}_{1,1}\MB{1:k_1^{\Tx{a}}, 1: r_1^{\Tx{a}}},\\
\bold{B}_{1^{\Tx{b}},1^{\Tx{a}}}&=\bold{A}_{1,1}\MB{k_1^{\Tx{a}}+1:k_1, 1: \delta_1^{\Tx{a}}},\\
\bold{A}_{1^{\Tx{b}},1^{\Tx{b}}}&=\bold{A}_{1,1}\MB{k_1^{\Tx{a}}+1:k_1, ,r_1^{\Tx{a}}+1:r_1},\\
\bold{B}_{1^{\Tx{a}},1^{\Tx{b}}}&=\bold{A}_{1,1}\MB{1: k_1^{\Tx{a}},r_1^{\Tx{a}}+1:r_1^{\Tx{a}}+\delta_1^{\Tx{b}}},\\
\bold{B}_{1^{\Tx{a}},i}&=\bold{B}_{1,i}\MB{1: k_1^{\Tx{a}},1:r_i},\forall 2\leq i\leq p,\\
\bold{B}_{1^{\Tx{b}},i}&=\bold{B}_{1,i}\MB{k_1^{\Tx{a}}+1:k_1,1:r_i},\forall 2\leq i\leq p,\\
\bold{B}_{i,1^{\Tx{a}}}&=\bold{B}_{1,i}\MB{1:k_i,1: \delta_1^{\Tx{a}}},\forall 2\leq i\leq p,\\
\bold{B}_{i,1^{\Tx{b}}}&=\bold{B}_{1,i}\MB{1:k_i,\delta_1^{\Tx{a}}+1:\delta_1},\forall 2\leq i\leq p,\\
\bold{U}_1^{\Tx{a}}&=\bold{U}_1\MB{1:\delta_1^{\Tx{a}},1:r_1^{\Tx{a}}},\\
\bold{U}_1^{\Tx{b}}&=\bold{U}_1\MB{\delta_1^{\Tx{a}}+1:\delta_1,r_1^{\Tx{a}}+1:r_1^{\Tx{b}}};
\end{align*}
\item Compute $\bold{y}_1$ by solving the equation $\bold{y}_1\bold{U}_1=\bold{c}_1-\bold{m}_1\bold{A}_{1,1}$, where $\bold{y}_i$, $i\in\MB{p}$, are described in the proof of \Cref{lemma: DLcodedis}. Find $\bold{y}_1^{\Tx{a}}\in \textup{GF}(q)^{\delta_1^{\Tx{a}}}$, $\bold{y}_1^{\Tx{b}}\in \textup{GF}(q)^{\delta_1^{\Tx{b}}}$ such that $\bold{y}_1=\MB{\bold{y}_1^{\Tx{a}},\bold{y}_1^{\Tx{b}}}$;
\item Compute $\bold{c}_1^{\Tx{a}}=\MB{\bold{m}_1^{\Tx{a}},\bold{m}_1^{\Tx{a}}\bold{A}_{1^{\Tx{a}},1^{\Tx{a}}}+\SB{\bold{m}_1^{\Tx{b}}\bold{B}_{1^{\Tx{b}},1^{\Tx{a}}}+\bold{y}_1^{\Tx{a}}}\bold{U}_1^{\Tx{a}}}$, and $\bold{c}_1^{\Tx{b}}=\MB{\bold{m}_1^{\Tx{b}},\bold{m}_1^{\Tx{b}}\bold{A}_{1^{\Tx{b}},1^{\Tx{b}}}+\SB{\bold{m}_1^{\Tx{a}}\bold{B}_{1^{\Tx{a}},1^{\Tx{b}}}+\bold{y}_1^{\Tx{b}}}\bold{U}_1^{\Tx{b}}}$.
\end{enumerate}
Note that the matrix $\bold{B}_{1,i}$ is vertically split into $\bold{B}_{1^{\textup{a}},i}$ and $\bold{B}_{1^{\textup{b}},i}$, while $\bold{B}_{i,1}$ is horizontally split into $\bold{B}_{i,1^{\textup{a}}}$ and $\bold{B}_{i,1^{\textup{b}}}$, for all $2\leq i\leq p$. Therefore, it is obvious that $\bold{m}_1\bold{B}_{1,i}=\bold{m}_{1^{\textup{a}}}\bold{B}_{1^{\textup{a}},i}+\bold{m}_{1^{\textup{b}}}\bold{B}_{1^{\textup{b}},i}$ and one can prove that the local codeword $\bold{c}_i$ doesn't change for $2\leq i\leq p$. Moreover, since both the local and the global parity check matrices for each non-split local cloud remain unchanged, the local and global erasure capability are not affected according to \Cref{lemma: DLcodedis}. Furthermore, one can prove that the local codewords stored in the new clouds $1^{\Tx{a}}$ and $1^{\Tx{b}}$ such that they are capable of correcting $(r_1^{\Tx{a}}-\delta_1^{\Tx{a}})$ and $(r_1^{\Tx{b}}-\delta_1^{\Tx{b}})$ local erasures, respectively.

\section{Conclusion}
\label{section: conclusion}
Multi-level accessible codes have been shown to be beneficial for cloud storage. While the previous literature works was typically focused on double-level accessible codes and their erasure-correction capabilities, in this paper, we focus on codes with hierarchical locality and additional properties motivated by their practical importance. We proposed a CRS-based code on a finite field with size that grows linearly with the maximum local codelength. We showed that our construction achieves scalability, heterogeneity and flexibility, which are important in dynamic cloud storage. 


\section*{Acknowledgment}
This work has received funding from NSF under the grants CCF-BSF 1718389 and CCF 1717602.

\bibliography{ref}
\bibliographystyle{IEEEtran}

\end{document}